\newtheorem{theorem}{Theorem}
\newtheorem{lemma}[theorem]{Lemma}
\newtheorem{corollary}[theorem]{Corollary}
\newtheorem{definition}[theorem]{Definition}
\newtheorem{remark}{Remark}
\definecolor{darkgreen}{RGB}{0,100,0}
\definecolor{firebrick}{RGB}{178,34,34}
\DeclareMathOperator{\Probability}{\mathbb{P}}
\DeclareMathOperator{\Expected}{\mathbb{E}}
\newcommand{\Ex}[1]{\Expected\pbrcx{#1}}
\newcommand{\Prob}[1]{\Probability\pbrcx{#1}}
\newcommand{\R}{{\mathbb{R}}}
\providecommand{\keywords}[1]{\textbf{\textit{keywords---}} #1}
\newcommand{\pth}[1]{\ensuremath{\left(#1\right)}}
\newcommand{\pbrcx}[1]{\ensuremath{\left[#1\right]}}
\newcommand{\rad}{{\rm{rad}}}
\newcommand{\e}{{\varepsilon}}
\newcommand{\Cech}{{\v{C}ech }}
\newcommand{\bary}{B}
\newcommand{\wbary}{{\hat{B}}}
\newcommand{\conv}{{\rm conv}}
\renewcommand{\wp}{\hat{P}}
\newcommand*\samethanks[1][\value{footnote}]{\footnotemark[#1]}
\title{Dimensionality Reduction for $k$-Distance Applied to Persistent Homology}
\author[1]{Shreya Arya
\thanks{
The research leading to these results has received funding from the European Research Council (ERC) under  the  European  Union's  Seventh  Framework  Programme (FP/2007-2013)  /  ERC  Grant Agreement No. 339025 GUDHI (Algorithmic Foundations of Geometry Understanding in Higher Dimensions).}
}
\author[2]{Jean-Daniel Boissonnat
\samethanks[1]
\thanks{Supported by the French government, through the 3IA C\^ote d'Azur Investments in the Future project managed by the 
National Research Agency (ANR) with the reference number ANR-19-P3IA-0002.
}
}
\author[3]{Kunal Dutta
\samethanks[1]
\thanks{Supported by the Polish NCN SONATA Grant no. 2019/35/D/ST6/04525. 
}
}
\author[4]{Martin Lotz}
\affil[1]{Duke University, Durham, USA. email: shreya.arya14@gmail.com}
\affil[2]{Universit\'e C\^ote d'Azur, INRIA, Sophia-Antipolis, France, email: jean-daniel.boissonnat@inria.fr}
\affil[3]{Faculty of Mathematics, Informatics, and Mechanics, University of Warsaw, Poland. email: K.Dutta@mimuw.edu.pl}
\affil[4]{Mathematics Institute, University of Warwick, UK. email: martin.lotz@warwick.ac.uk}
\begin{document}

\maketitle

\begin{abstract}
  Given a set $P$ of $n$ points and a constant $k$, we are interested
  in computing the persistent homology of the \Cech filtration of $P$
  for the $k$-distance, and investigate the effectiveness of
  dimensionality reduction for this problem, answering an
  open question of Sheehy [\emph{Proc. SoCG}, 2014].
  We show that \emph{any} linear transformation that preserves 
  pairwise distances up to a $(1\pm \e)$ multiplicative factor, 
  must preserve the persistent homology of the \Cech filtration 
  up to a factor of $(1-\e)^{-1}$.
  Our results also show that the 
  Vietoris-Rips and Delaunay filtrations for the $k$-distance, as well as the \Cech filtration for the 
  approximate $k$-distance of Buchet et al. [\emph{J. Comput. Geom.}, 2016] are preserved up to a $(1\pm \e)$ factor.

  We also prove extensions of our main theorem, for point sets $(i)$ lying in a region of bounded   
  Gaussian width or $(ii)$ on a low-dimensional submanifold, obtaining embeddings having the dimension bounds of Lotz 
  [\emph{Proc. Roy. Soc.}, 2019] and Clarkson [\emph{Proc. SoCG}, 2008] respectively.
  Our results also work in the \emph{terminal dimensionality reduction} setting, where the distance of any point
  in the original ambient space, to any point in $P$, needs to be approximately preserved.
\end{abstract}

\keywords{Dimensionality reduction, Johnson-Lindenstrauss lemma,
  topological data analysis, persistent homology, $k$-distance,
  distance to measure} \\

	\section{Introduction}
	\label{sec:intro}
	
	Persistent homology is one of the main tools to extract information from data in topological data analysis. 
	Given a data set as a point cloud in some ambient space, the idea is to construct a filtration sequence of topological spaces from the point cloud, 
	and extract topological information from this sequence. The topological spaces are usually constructed by considering balls around the data points, 
	in some given metric of interest, as the open sets. However the usual distance function is highly sensitive to the presence of outliers and noise.  
	One approach is to use distance functions that are more robust to outliers, such as the \emph{distance-to-a-measure} and the related \emph{$k$-distance} 
	(for finite data sets), proposed recently by Chazal et al. \cite{DBLP:journals/focm/ChazalCM11}
	Although this is a promising direction, an exact implementation can have significant cost in run-time. To overcome this difficulty,  
	approximations of the $k$-distance have been proposed recently  that led to certified approximations of  persistent 
	homology \cite{DBLP:journals/dcg/GuibasMM13, DBLP:journals/comgeo/BuchetCOS16}. Other approaches involve using kernels~\cite{DBLP:conf/compgeom/PhillipsWZ15} and
	de-noising algorithms~\cite{DBLP:journals/jocg/BuchetDWW18,journals/sis/Zhang13}.
	
	In all the above settings, the sub-routines required for computing persistent homology have exponential or worse dependence on the 
	ambient dimension, and rapidly become unusable in real-time once the dimension grows beyond a few dozens - which is indeed the case in many applications, 
	for example in image processing, neuro-biological networks, and data mining (see e.g.~\cite{giraud2014introduction}). This phenomenon is often referred 
	to as the \emph{curse of dimensionality}. 
\vspace{2mm}

\noindent {\bf The Johnson-Lindenstrauss Lemma.}
	One of the simplest and most commonly used mechanisms to mitigate this curse, is that of \emph{random projections}, as applied in the
	celebrated  Johnson-Lindenstrauss lemma (JL Lemma for short)~\cite{johnson_lindenstrauss_1984}.  
	The JL Lemma states that any set of
        $n$ points in Euclidean space can be embedded into a space of
        dimension $ O(\e^{-2}\log n) $ with $(1 \pm \e) $
        distortion. 
	Since the initial non-constructive proof of this fact by Johnson and Lindenstrauss~\cite{johnson_lindenstrauss_1984}, 
	several authors have given successive improvements, e.g., Indyk, Motwani, Raghavan and Vempala~\cite{DBLP:conf/stoc/IndykMRV97}, 
	Dasgupta and Gupta~\cite{DBLP:journals/rsa/DasguptaG03}, Achlioptas~\cite{DBLP:conf/pods/Achlioptas01}, 
	Ailon and Chazelle~\cite{DBLP:journals/siamcomp/AilonC09}, Matou\v{s}ek~\cite{doi:10.1002/rsa.20218}, 
	Krahmer and Ward~\cite{DBLP:journals/siamma/KrahmerW11}, and Kane and Nelson~\cite{DBLP:journals/jacm/KaneN14}.
        These address the issues of \emph{efficient} construction
	and implementation, using random matrices that support fast multiplication. 
        Dirksen~\cite{DBLP:journals/focm/Dirksen16} gave a unified theory for 
        dimensionality reduction using subgaussian matrices.

        In a different direction, variants of the Johnson-Lindenstrauss lemma giving embeddings into spaces of lower dimension than the JL bound have been given 
        under several specific settings. For point sets lying in regions of bounded Gaussian width, a theorem of Gordon~\cite{10.1007/BFb0081737} 
        implies that the dimension of the embedding can be reduced to a function of the Gaussian width, independent of the number of 
        points. Sarlos~\cite{DBLP:conf/focs/Sarlos06} showed that points lying on a $d$-flat can be mapped to $O(d/\e^2)$ dimensions 
        independently of the number of points. Baraniuk and Wakin~\cite{DBLP:journals/focm/BaraniukW09} proved an analogous result for points on a smooth submanifold 
        of Euclidean space,
        which was subsequently sharpened by Clarkson~\cite{DBLP:conf/compgeom/Clarkson08} (see also Verma~\cite{nverma2011}), whose version directly preserves geodesic distances on 
        the submanifold. Other related results include those of Indyk and Naor~\cite{DBLP:conf/compgeom/Clarkson08} 
        for sets of bounded doubling dimension and Alon and Klartag~\cite{DBLP:conf/focs/AlonK17} for general 
        inner products, with additive error only. Recently, Nelson and Narayanan~\cite{DBLP:conf/stoc/NarayananN19}, building on earlier 
        results~\cite{DBLP:journals/tcs/ElkinFN17,DBLP:conf/stoc/MahabadiMMR18}, showed that for a given set of points or \emph{terminals}, using just 
        one extra dimension from the Johnson-Lindenstrauss bound, it is possible to achieve dimensionality reduction in a way 
        that preserves not only inter-terminal distances, but also distances between any terminal to \emph{any} point in the ambient space.
 
        \begin{remark}
        Our results are based on the notion of weighted points, and as in most applications of the JL lemma, give a reduced dimensionality typically of the 
        order of hundreds. This is very useful if the ambient dimensionality is much higher magnitude (e.g. $10^6$). 
        Moreover, some of the above-mentioned variants and generalizations such as for point sets having bounded Gaussian width  
        or lying on a lower-dimensional submanifold, the reduced dimensionality is independent of the number of input points, 
        which allows for still better reductions.  
        \end{remark}

\vspace{2mm}

\noindent {\bf Dimension Reduction and Persistent Homology.} The JL Lemma has also been used by Sheehy~\cite{DBLP:conf/compgeom/Sheehy14} 
	and Lotz~\cite{doi:10.1098/rspa.2019.0081} to reduce the complexity of computing persistent homology. Both Sheehy and Lotz show that the persistent 
	homology of a point cloud is approximately preserved under random projections~\cite{DBLP:conf/compgeom/Sheehy14,doi:10.1098/rspa.2019.0081}, up to 
	a $(1\pm \varepsilon)$ multiplicative factor, for any $\varepsilon\in [0,1]$. Sheehy proves this for an $n$-point set, whereas 
	Lotz's generalization applies to sets of bounded Gaussian width, and also implies dimensionality reductions for sets of bounded doubling dimension, 
        in terms of the \emph{spread} (ratio of the maximum to minimum interpoint distance).
	However, their techniques involve only the usual distance to a point set and therefore 
	remain sensitive to outliers and noise as mentioned earlier.  The question of adapting the method of random projections in order to reduce the complexity of 
	computing persistent homology using the $k$-distance, is therefore a natural one, and has been raised by Sheehy~\cite{DBLP:conf/compgeom/Sheehy14}, 
	who observed that \emph{``One notable distance function that is missing from this paper} [i.e.~\cite{DBLP:conf/compgeom/Sheehy14}] \emph{is the so-called 
	distance to a measure or \ldots $k$-distance \ldots it remains open whether the $k$-distance itself is $(1\pm \varepsilon)$-preserved under random projection."}

        \paragraph*{Our Contribution} 
	In this paper, we combine the method of random projections
        with the $k$-distance  
	and show its applicability in computing  persistent homology.
	It is not very hard to see that for a given
        point set $P$, the random Johnson-Lindenstrauss mapping
        preserves the pointwise $k$-distance to $P$
        (Theorem~\ref{jlkdist}). However,  this is not enough to  preserve intersections of balls at
        varying scales of the radius parameter, and thus does not suffice to
        preserve the persistent homology of \Cech filtrations, as noted by
        Sheehy~\cite{DBLP:conf/compgeom/Sheehy14} and Lotz~\cite{doi:10.1098/rspa.2019.0081}. 
        We show how the squared radius of a set of weighted points 
        can be expressed as a convex combination of pairwise squared distances.
        From this, it follows that the \Cech filtration under the $k$-distance, will be preserved by 
        \emph{any} linear mapping that preserves pairwise distances.

        \paragraph*{Extensions} 
        Further, as our main result applies to any linear mapping that approximately preserves pairwise distances,
        the analogous versions for bounded Gaussian width, points on submanifolds of $\R^D$, terminal dimensionality reduction and others apply immediately. 
        Thus, we give several extensions of our results. The extensions provide bounds which do not depend on the number of points in the sample. 
        The first one, analogous to~\cite{doi:10.1098/rspa.2019.0081}, shows that the persistent homology with 
        respect to the $k$-distance, of point sets contained in regions having bounded Gaussian width, can be 
        preserved via dimensionality reduction, using an embedding with dimension bounded by a function of the Gaussian width.
        Another result is that for points lying in a low-dimensional submanifold of a high-dimensional Euclidean space, 
        the dimension of the embedding preserving the persistent homology with $k$-distance depends linearly on the dimension of the submanifold.
        Both these settings are commonly encountered in high-dimensional data analysis and machine learning (see, e.g., the 
        \emph{manifold hypothesis}~\cite{fefferman2016testing}). We mention that analogous to~\cite{DBLP:conf/stoc/NarayananN19}, 
        it is possible to preserve the $k$-distance based persistent homology while also preserving the distance  
        from any point in the ambient space to every point (i.e., terminal) in $P$ (and therefore the $k$-distance to $P$), using just one extra dimension.
        
        \paragraph{Run-time and Efficiency} 
        In many other applications of the Johnson-Lindenstrauss dimensionality reduction, multiplying by a dense gaussian matrix 
        is a significant overhead, and can seriously affect any gains resulting from working in a lower dimensional space.
        However, as is pointed out in~\cite{doi:10.1098/rspa.2019.0081}, in the computation of persistent homology the dimensionality 
        reduction step is carried out only once for the $n$ data points at the beginning of the construction. 
        Having said that, it should still be observed that most of the recent results on dimensionality reduction using sparse subgaussian matrices 
        ~\cite{DBLP:journals/siamcomp/AilonC09,DBLP:journals/jacm/KaneN14,DBLP:journals/siamma/KrahmerW11}
        can also be used to compute the $k$-distance persistent homology, with little to no extra cost.

	\begin{remark} It should be noted that the approach of using dimensionality reduction for the $k$-distance, is complementary 
	to denoising techniques such as~\cite{DBLP:journals/jocg/BuchetDWW18} as we do not try to remove noise, only 
	to be more robust to noise. Therefore, it can be used in conjunction with 
	denoising techniques, as a pre-processing tool when the dimensionality is high. 
        \end{remark}

        \paragraph*{Outline} 
	 The rest of this paper is organized as follows. In
        Section~\ref{sec:background}, we briefly summarize some basic
	definitions and background. Our theorems are stated and proven in
        Section~\ref{sec: results}. 
        Some applications of our results are derived in Section~\ref{sec:appln}.
	We end with some final remarks and open questions in
        Section~\ref{sec:future-work}.

\section{Preliminaries}
\label{sec:background}
      We need a well-known identity for the variance of bounded random variables, which will be crucial in the proof of our main theorem. A 
      short probabilistic proof of~\eqref{eqn:basic-var-id} is given in the Appendix.
          Let $A$ be a set of points $p_1,\ldots,p_l\in \R^m$. A point $b\in \R^m$ is a \emph{convex combination} of the points in $A$ if
          there exist non-negative reals $\lambda_1,\ldots,\lambda_l\geq 0$ such that $b=\sum_{i=1}^l \lambda_i p_i$ and $\sum_{i=1}^l \lambda_i=1$.

          Let $b = \sum_{i=1}^k \lambda_i p_i$ be a convex combination of points $p_1,\ldots,p_k \in \R^D$. Then for any point $x\in \R^D$,
          \begin{eqnarray}
             \sum_{i=1}^k \lambda_i \|x-p_i\|^2  &=& \|x-b\|^2 + \sum_{i=1}^k \lambda_i \|b-p_i\|^2. \label{eqn:basic-var-id}
          \end{eqnarray}
          In particular, if $\lambda_i = 1/k$ for all $i$, we have 
          \begin{eqnarray}
             \frac{1}{k}\sum_{i=1}^k \|x-p_i\|^2  &=& \|x-b\|^2 + \sum_{i=1}^k \frac{1}{k} \|b-p_i\|^2. \label{eqn:basic-var-id2}
          \end{eqnarray}

	\subsection{The Johnson-Lindenstrauss Lemma}
	\label{subsec:rand-proj}

	The Johnson-Lindenstrauss Lemma
        \cite{johnson_lindenstrauss_1984} states that any subset of
        $n$ points of Euclidean space can be embedded in a space of
        dimension $ O(\e^{-2}\log n) $ with $ (1 \pm \e) $
        distortion. 
	We use the notion of an \emph{$\e$-distortion map with respect to $P$} (also commonly called a Johnson-Lindenstrauss map).
        \begin{definition}
            Given a point set $P\subset \R^D$, and $\e\in (0,1)$, a mapping $f:\R^D\to \R^d$ for some 
        $d\leq D$ is an \emph{$\e$-distortion map with respect to $P$}, if for all $x,y\in P$, 
        \[ (1-\e)\|x-y\| \leq \|f(x)-f(y)\| \leq (1+\e)\|x-y\|.\]
        \end{definition}

        A random variable $X$ with mean zero is said to be \emph{subgaussian} with \emph{subgaussian norm} $K$ if 
        $\Ex{\exp\pth{X^2/K^2}} \leq 2$. In this case, the tails of the random variable satisfy
        \[ \Prob{|X|\geq t} \leq 2\exp\pth{-t^2/2K^2}.\]
        We focus on the case where the Johnson-Lindenstrauss embedding is carried out via random subgaussian matrices, 
        i.e., matrices where for some given $K >0$, each entry is an independent subgaussian random variable with subgaussian norm $K$.
	This case is general enough to include the mappings of Achlioptas~\cite{DBLP:conf/pods/Achlioptas01}, 
	Ailon and Chazelle~\cite{DBLP:journals/siamcomp/AilonC09}, Dasgupta and Gupta~\cite{DBLP:journals/rsa/DasguptaG03}, 
        Indyk, Motwani, Raghavan, and Vempala~\cite{DBLP:conf/stoc/IndykMRV97}, and Matou\v{s}ek~\cite{doi:10.1002/rsa.20218}
	(see Dirksen for a unified treatment~\cite{DBLP:journals/focm/Dirksen16}). 

	\begin{lemma}[JL Lemma]
		\label{JLlemma}
		Given $ 0 < \e,\delta < 1 $, and a finite point set $ P
                \subset \mathbb{R}^D$ of size $|P|=n$.
		Then a random linear mapping $ f \colon \mathbb{R}^D \to \mathbb{R}^d $  
		where $ d=O(\e^{-2}\log n)  $ given by $f(v) =
                \sqrt{\frac{D}{d}}Gv$ 
                where $G$ is a $d\times D$ subgaussian random matrix,
                is an $\e$-distortion map with respect to $P$, with probability at least $1-\delta$.
	\end{lemma}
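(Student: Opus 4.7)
The plan is the standard three-step strategy: reduce to a single-vector concentration inequality for a subgaussian matrix, apply it to every pairwise difference from $P$, and conclude by a union bound. Given how the scaling $\sqrt{D/d}$ is chosen, I would work under the normalisation that $\Ex{\|f(v)\|^2}=\|v\|^2$ for every $v\in\R^D$ (so that the entries of $G$ have variance $1/D$, the subgaussian norm $K$ bounding only the tail shape).

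The key step is single-vector concentration. I would fix $v\in\R^D\setminus\{0\}$ and write
\[
\|f(v)\|^2 \;=\; \frac{D}{d}\sum_{i=1}^{d}\langle G_i,v\rangle^2,
\]
where $G_1,\dots,G_d$ are the rows of $G$. Since the entries of $G$ are independent centred subgaussians with subgaussian norm $K$, each $\langle G_i,v\rangle$ is a linear combination of independent centred subgaussians, and hence itself subgaussian with norm $O(K\|v\|)$ (a standard preservation property of the subgaussian class). Consequently the $d$ summands $\langle G_i,v\rangle^2$ are i.i.d.\ subexponential with mean proportional to $\|v\|^2/D$, and a Bernstein-type inequality (equivalently, Hanson--Wright applied to the diagonal quadratic form $\|Gv\|^2$) yields
\[
\Prob{\bigl|\|f(v)\|^2-\|v\|^2\bigr| \geq \e\|v\|^2} \;\leq\; 2\exp\!\bigl(-c(K)\,\e^2 d\bigr)
\]
for all $\e\in(0,1)$ and a constant $c(K)>0$.

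I would then apply this bound with $v = x-y$ for each pair $x\neq y$ in $P$; linearity of $f$ gives $f(x)-f(y)=f(x-y)$, so the squared-norm inequality translates into $\bigl|\|f(x)-f(y)\|^2-\|x-y\|^2\bigr|\leq \e\|x-y\|^2$, and taking square roots (using $\sqrt{1-\e}\geq 1-\e$ and $\sqrt{1+\e}\leq 1+\e$ on $(0,1)$) yields the required multiplicative distortion bound for that pair. A union bound over the at most $\binom{n}{2}$ pairs controls the total failure probability by $n^{2}\exp(-c(K)\e^{2}d)$, which is at most $\delta$ as soon as $d \geq c(K)^{-1}\e^{-2}\log(n^{2}/\delta) = O(\e^{-2}\log n)$, for fixed $\delta$ and $K$.

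The only non-trivial step is the first one: establishing the sub-exponential tail with the correct $\e^2 d$ rate for \emph{general} subgaussian entries. Depending on taste, one can either quote the Hanson--Wright inequality directly, or carry out the moment-generating-function calculation by bounding $\Ex{\exp\!\bigl(\lambda(\langle G_i,v\rangle^{2}-\Ex{\langle G_i,v\rangle^{2}})\bigr)}$ for small $\lambda$ via the subgaussian hypothesis and invoking the Chernoff bound across the independent rows. Once this single-vector estimate is in hand, the reductions in the remaining steps are routine, and indeed this modularity is precisely why the unified subgaussian framework of Dirksen subsumes the earlier constructions cited in the introduction.
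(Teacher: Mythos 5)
Your proposal is correct and is the standard argument for the subgaussian Johnson--Lindenstrauss lemma: single-vector concentration of $\|f(v)\|^2$ via a Bernstein/Hanson--Wright bound for the sub-exponential variables $\langle G_i,v\rangle^2$, applied to the $\binom{n}{2}$ difference vectors $x-y$ with a union bound, then taking square roots. The paper does not prove this lemma itself --- it cites the literature (in particular Dirksen's unified subgaussian treatment), and your argument matches what those references do, modulo the minor bookkeeping point that the entry variance must be normalised to $1/D$ (equivalently, the subgaussian norm scales as $K/\sqrt{D}$) for the stated scaling $\sqrt{D/d}$ to make $\|f(v)\|^2$ an unbiased estimator of $\|v\|^2$.
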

    
        \begin{definition}
        \label{def:eps-distort-map}
        For ease of recall, we shall refer to a random linear 
        mapping $ f \colon \mathbb{R}^D \to \mathbb{R}^d $  
                given by $f(v) =
                \sqrt{\frac{D}{d}}Gv$ 
                where $G$ is a $d\times D$ subgaussian random matrix, as a 
        \emph{subgaussian $\e$-distortion map}.
        \end{definition}

While in the version given here the dimension of the embedding depends on the number of points in $P$ and subgaussian projections, 
the JL lemma has been generalized and extended in several different directions, some of which are briefly outlined below. The 
generalization of the results of this paper to these more general settings is straightforward.

        \paragraph{Sets of Bounded Gaussian Width}

        \begin{definition}
           Given a set $S \subset \R^D$, the \emph{Gaussian width} of $S$ is
           \[ w(S) := \Ex{\sup_{x\in S} \langle x,g\rangle},\]
           where $g \in \R^D$ is a random standard $D$-dimensional Gaussian vector. 
        \end{definition}

        In several areas like geometric functional analysis, compressed sensing, machine learning, etc. the Gaussian width 
        is a very useful measure of the width of a set in Euclidean space (see e.g.~\cite{DBLP:books/daglib/0036092} 
        and the references therein). It is also closely related to the \emph{statistical dimension} of a set (see e.g. 
        ~\cite[Chapter 7]{vershynin_2018}).
        The following analogue of the Johnson Lindenstrauss lemma for sets of bounded Gaussian width was
        given in~\cite{doi:10.1098/rspa.2019.0081}. It essentially follows from a result of Gordon~\cite{10.1007/BFb0081737}.  

        \begin{theorem}[~\cite{doi:10.1098/rspa.2019.0081}, Theorem 3.1]
        \label{thm:lotz-gordon-gw}
        Given $\e,\; \delta \in (0,1)$, $P\subset \R^D$, let $S := \{(x-y)/\|x-y\| \;:\; x,y \in P\}$. Then 
        for any $d\geq \frac{\pth{w(S)+\sqrt{2\log (2/\delta)}}^2}{\e^2}+1$, 
        the function $f:\R^D\to \R^d$ given by $f(x) = \pth{\sqrt{D/d}}Gx$, where $G$ is a random $d\times D$ Gaussian matrix $G$, 
        is 
        a subgaussian $\e$-distortion map with respect to $P$, with probability at least $1-\delta$.
        \end{theorem}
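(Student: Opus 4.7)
The plan is to reduce the theorem to Gordon's escape-through-a-mesh inequality applied to the normalized difference set $S$, followed by Gaussian Lipschitz concentration and a rescaling argument.

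First I would reformulate the conclusion as a statement about $S$. Since $f$ is linear, for any $x,y \in P$ with $z := (x-y)/\|x-y\|$ we have $\|f(x)-f(y)\|/\|x-y\| = \|f(z)\|$, so $f$ is an $\e$-distortion map with respect to $P$ if and only if $\sup_{z \in S} \bigl|\|f(z)\|-1\bigr| \leq \e$. Writing $G = \tilde G/\sqrt D$ where $\tilde G$ has i.i.d.\ $N(0,1)$ entries gives $f(z) = \tilde G z/\sqrt d$, so it suffices to control $\sup_{z \in S}\bigl|\|\tilde G z\| - \sqrt d\bigr|$.

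Next I would invoke Gordon's inequality: for any $S \subset S^{D-1}$ and $\tilde G$ a $d \times D$ standard Gaussian matrix,
\[ \Ex{\sup_{z \in S} \bigl|\|\tilde G z\| - E_d\bigr|} \leq w(S), \]
where $E_d := \Ex{\|g\|_2}$ for $g \sim N(0,I_d)$. I would then observe that the functional $\tilde G \mapsto \sup_{z \in S}\bigl|\|\tilde G z\| - E_d\bigr|$ is $1$-Lipschitz in the Frobenius norm of $\tilde G$ (for each unit $z$, the map $\tilde G \mapsto \|\tilde G z\|$ has gradient $(\tilde G z)z^{\top}/\|\tilde G z\|$ of Frobenius norm $1$, and taking a supremum of $1$-Lipschitz functions and shifting by a constant preserves the Lipschitz bound). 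The Borell--Tsirelson--Sudakov concentration inequality for Gaussian measures then yields
\[ \Prob{\sup_{z \in S}\bigl|\|\tilde G z\| - E_d\bigr| \geq w(S) + t} \leq 2\, e^{-t^2/2}, \]
and choosing $t = \sqrt{2\log(2/\delta)}$ gives that with probability at least $1-\delta$,
\[ \sup_{z \in S}\bigl|\|\tilde G z\| - E_d\bigr| \leq w(S) + \sqrt{2\log(2/\delta)} =: \alpha. \]

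Finally I would divide by $\sqrt d$ and translate back to $\|f(z)\|$. Using the standard estimates $\sqrt{d-1} \leq E_d \leq \sqrt d$, the hypothesis $d \geq \alpha^2/\e^2 + 1$ gives $\alpha/\sqrt{d-1} \leq \e$, from which a short calculation yields $\sup_{z \in S}\bigl|\|f(z)\|-1\bigr| \leq \e$ on the same event. The main obstacle is this last bookkeeping step: carefully accounting for the slack between $E_d$ and $\sqrt d$ is precisely what forces the extra ``$+1$'' in the dimension requirement. The conceptual content reduces entirely to Gordon's inequality and Gaussian Lipschitz concentration, both of which are classical, so no genuine new probabilistic argument is required.
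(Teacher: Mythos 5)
Your proposal is correct and follows essentially the route the paper intends: the paper gives no proof of this statement, citing it as Lotz's Theorem~3.1 and noting that it ``essentially follows from a result of Gordon,'' which is exactly the Gordon-plus-Gaussian-concentration-plus-rescaling argument you outline. The only quibble is that Gordon's theorem is classically stated as the two one-sided bounds $\Ex{\sup_{z\in S}\|\tilde{G}z\|}\le E_d+w(S)$ and $\Ex{\inf_{z\in S}\|\tilde{G}z\|}\ge E_d-w(S)$ rather than your single two-sided expectation bound with constant $1$; applying Borell--TIS concentration to each of these $1$-Lipschitz functionals and taking a union bound recovers exactly your tail estimate $2e^{-t^2/2}$, so nothing in the argument is lost.
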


The result extends to subgaussian matrices with slightly worse constants. One of the benefits of this version is that the set $P$ does not need to be finite. We refer 
to~\cite{doi:10.1098/rspa.2019.0081} for more on the Gaussian width in our context. 
        \paragraph{Submanifolds of Euclidean Space}
        \label{sec:ld-manif}
        For point sets lying on a low-dimensional submanifold of a high-dimensional Euclidean space, one can obtain an embedding with a smaller dimension using the 
        bounds of Baraniuk and Wakin~\cite{DBLP:journals/focm/BaraniukW09} or Clarkson~\cite{DBLP:conf/compgeom/Clarkson08}, which will depend only on the parameters of the submanifold.   
       %
        Clarkson's theorem is summarised below.
        \begin{theorem}[Clarkson~\cite{DBLP:conf/compgeom/Clarkson08}]
        \label{thm:clark-manif-bdd}
        There exists an absolute constant $c>0$ such that, given a connected, compact, orientable, differentiable $\mu$-dimensional 
        submanifold $M \subset \R^D$, and $\e,\delta \in (0,1)$,  
        a random projection map $f:\R^D\to \R^d$, given by $v\mapsto \sqrt{\frac{D}{d}}Gv$, where $G$ is a $d\times D$ subgaussian random matrix, 
        is an $\e$-distortion map with respect to $P$, with probability at least $1-\delta$, for 
        \[ d \geq c\pth{\frac{\mu\log(1/\e)+\log(1/\delta)}{\e^2} + \frac{C(M)}{\e^2}},\]
        where $C(M)$ depends only on $M$. 
        \end{theorem}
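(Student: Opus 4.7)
The plan is to reduce the preservation-of-distances statement on the continuous manifold $M$ to a finite discretization problem to which the standard subgaussian Johnson--Lindenstrauss guarantee (Lemma~\ref{JLlemma}) can be applied, and then use the geometric regularity of $M$ (captured by $C(M)$) to bootstrap from the net to all pairs of points. Because $M$ is a compact $\mu$-dimensional differentiable submanifold of $\R^D$, one can build a geodesic $\eta$-net $N_M \subset M$ whose cardinality is at most $K_1(M)/\eta^\mu$ for a suitable $\eta$ to be chosen, where the constant $K_1(M)$ depends on intrinsic volume, reach, and curvature of $M$; these are precisely the data that will be absorbed into $C(M)$.

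Next, to control distances between arbitrary pairs in $M$, not merely between net points, I would attach to each $p \in N_M$ an $\eta$-net on the unit sphere of the tangent space $T_p M$, which has size $O(1/\eta^{\mu-1})$. Writing $\calN$ for the union $N_M$ together with all these unit tangent vectors (translated to $\R^D$), one has $|\calN| \leq K_2(M)/\eta^{2\mu-1}$. Applying Lemma~\ref{JLlemma} to $\calN$ forces
\[
 d \;\geq\; c\cdot\frac{\log|\calN| + \log(1/\delta)}{\e^2} \;=\; c\cdot\frac{\mu\log(1/\eta) + \log(1/\delta) + \log K_2(M)}{\e^2},
\]
and with the choice $\eta \asymp \e$ this matches the claimed bound, with the $\log K_2(M)$ term bundled into $C(M)/\e^2$. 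With probability at least $1-\delta$ the subgaussian map $f$ then preserves lengths of all chords between net points and all net tangent directions to factor $(1\pm\e/c')$ for a suitable constant $c'$.

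Finally, to extend the distortion bound from the net to all $x,y \in M$, I would argue locally. For $x,y$ sufficiently close (on the scale of the reach of $M$), the chord $x-y$ differs from a scaled unit tangent vector at some net point $p$ by an error quadratic in the local length scale, controlled by the second fundamental form; this lets us approximate $f(x)-f(y)$ by $f$ applied to a preserved tangent direction and absorb the discrepancy into $\e\|x-y\|$. For distant pairs, a geodesic polygonal path through $N_M$ whose segments are short enough to fall into the previous case gives a telescoping estimate, provided the sum of deviations along the path remains $O(\e\|x-y\|)$; this is where the reach of $M$ enters, since it lower-bounds the scale on which $M$ looks flat and upper-bounds how much a geodesic deviates from a straight line.

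The main obstacle is precisely this last step: turning a net-based, tangent-direction-preserving estimate into a uniform multiplicative bound on \emph{all} secants of $M$ requires a clean, quantitative handle on the manifold's curvature and its reach, and it is the bookkeeping of these quantities, rather than the JL application, that forces the $\mu\log(1/\e)$ scaling and the additive $C(M)/\e^2$ penalty. A looser analysis would give an unwanted $\log(\mathrm{spread})$ or $\log(\mathrm{volume})$ factor; the point of Clarkson's refinement over Baraniuk--Wakin is to show that after the tangent-net trick the only surviving dependence on $M$ is isolated into the single constant $C(M)$.
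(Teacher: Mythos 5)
First, note that the paper does not prove this statement at all: Theorem~\ref{thm:clark-manif-bdd} is imported verbatim as a citation of Clarkson's result, so there is no internal proof to compare yours against. Judged on its own terms, your sketch correctly reproduces the general Baraniuk--Wakin/Clarkson strategy (a geodesic net on $M$, augmented by nets on the unit spheres of the tangent spaces, fed into the finite JL lemma, with $\eta\asymp\e$ producing the $\mu\log(1/\e)$ term and the manifold-dependent constants absorbed into $C(M)$), and the cardinality bookkeeping is consistent with the claimed dimension bound. But the decisive step --- passing from distortion control on the net to distortion control on \emph{all} secants of $M$ --- is exactly the step you leave open, and it is not a routine perturbation argument.

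The concrete obstruction is that $f=\sqrt{D/d}\,G$ has operator norm of order $D/\sqrt{d}$ on $\R^D$, so knowing that $x-y$ is within $O(\e\|x-y\|)$ of a controlled chord or tangent direction does \emph{not} let you ``absorb the discrepancy into $\e\|x-y\|$'': the image under $f$ of the small discrepancy vector can be enormous unless that vector itself lies (up to higher-order terms) in a low-dimensional subspace on which $f$ is known to act almost isometrically. The actual proofs handle this by establishing a restricted-isometry property on the $O(\mu)$-dimensional tangent subspaces at the net points and Taylor-expanding the manifold around them, or (in the Gaussian-width formulation) by a chaining argument over the set of normalized secants $S=\{(x-y)/\|x-y\|\}$; your proposed ``telescoping along a geodesic polygonal path'' for distant pairs is also the wrong device, since additive errors over many short segments need not telescope into a multiplicative $O(\e)$ error on the chord (whose length can be far smaller than the geodesic length), and the standard treatment of long secants instead covers the normalized chords $(p-q)/\|p-q\|$ between net points directly at resolution $\eta/T$. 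So the outline is sound, but the proof as written has a genuine gap precisely where you flag ``the main obstacle.''
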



        \paragraph{Terminal Dimensionality Reduction}
        In a recent breakthrough result, Narayanan and Nelson~\cite{DBLP:conf/stoc/NarayananN19} showed that it is possible to 
        $(1\pm O(\e))$-preserve distances from a set of $n$ \emph{terminals} in a high-dimensional space to \emph{every point} in the 
        space, using only one dimension more than the Johnson-Lindenstrauss bound. 
        A summarized version of their theorem is as follows. 
        The derivation of the second statement is given in the Appendix.
        \begin{theorem}[\cite{DBLP:conf/stoc/NarayananN19}, Theorem 3.2, Lemma 3.2]
        \label{thm:nel-nar-term-dim-red}
        Given terminals $x_1,\ldots,x_n\in \R^D$ and $\e\in (0,1)$, there exists a non-linear map
        $f:\R^D\to \R^{d'}$ with $d'=d+1$, where $d= O\pth{\frac{\log n}{\e^2}}$ is the bound given in Lemma~\ref{JLlemma}, 
        such that $f$ is an $\e$-distortion map for any pairwise distance between $x_i,x_j\in P$,
        and an $O(\e)$-distortion map for the distances between any pairs of points $(x,u)$, where $x\in P$ and $u\in \R^D$. 
        Further, the projection of $f$ to its first $d-1$ coordinates is a subgaussian $\e$-distortion map.
        \end{theorem}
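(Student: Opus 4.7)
The plan is to exploit the explicit form of the map constructed by Narayanan and Nelson in the proof of their Theorem 3.2. Their construction produces a map of the form $f(x) = (\Pi x, \psi(x))$, where $\Pi : \R^D \to \R^{d}$ is a (scaled) subgaussian random matrix of the type appearing in Lemma~\ref{JLlemma} and $\psi : \R^D \to \R$ is a nonlinear coordinate appended to enforce approximate preservation of distances from terminals in $P$ to arbitrary points of $\R^D$. The first half of the statement---namely that $f$ is an $\e$-distortion map for pairs in $P$ and an $O(\e)$-distortion map for pairs $(x,u)$ with $x \in P$ and $u \in \R^D$---is then exactly the content of~\cite{DBLP:conf/stoc/NarayananN19}, and I would simply quote it.

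For the second half, I would observe that by construction the first $d$ coordinates of $f$ coincide with $\Pi x$. Projecting further to the first $d-1$ coordinates therefore amounts to deleting one row of $\Pi$, producing a $(d-1)\times D$ subgaussian random matrix $\Pi'$. Since $d-1 = \Theta(\log n/\e^2)$, applying Lemma~\ref{JLlemma} to $\Pi'$ (with the scaling factor $\sqrt{D/(d-1)}$ in place of $\sqrt{D/d}$) shows that the restricted map is itself a subgaussian $\e$-distortion map with respect to $P$, in the sense of Definition~\ref{def:eps-distort-map}. In particular this restriction is linear, which is what is needed when feeding it into the main $k$-distance results later in the paper.

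The one subtlety to handle is that both guarantees must hold on the same sample of the underlying random matrix: the full Narayanan--Nelson distortion bounds and the JL-type bound for the restricted submatrix $\Pi'$. Each is a failure event of probability at most a constant multiple of $\delta$, so a union bound delivers both simultaneously with probability at least $1-\delta$, at the cost of absorbing a constant factor into $d$. Thus the main obstacle is not conceptual---everything nontrivial is already done in~\cite{DBLP:conf/stoc/NarayananN19}---but rather a careful matching of the scaling and probabilistic conventions in Definition~\ref{def:eps-distort-map}, ensuring that the restricted map genuinely qualifies as a subgaussian $\e$-distortion map for the purposes required in the sequel.
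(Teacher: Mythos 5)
Your proposal is correct and matches the paper's intended argument: the paper likewise treats the distortion guarantees as a direct citation of Narayanan--Nelson and obtains the final claim by observing that their outer extension restricts, on the terminals, to the linear subgaussian JL matrix $\Pi$ occupying the first coordinates (the derivation the paper defers to its appendix amounts to exactly your observation about deleting a row, rescaling, and taking a union bound). The only point worth making explicit is that the identification of the leading coordinates of $f$ with $\Pi x$ holds only for $x\in P$, which suffices here because a subgaussian $\e$-distortion map with respect to $P$ is only required to control pairwise distances within $P$, and that is all the later applications (e.g.\ Theorem~\ref{thm:term-kdist-dim-red}) use.
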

        As noted in~\cite{DBLP:conf/stoc/NarayananN19}, any such map must necessarily be non-linear. Suppose not, 
        then on translating the origin to be a terminal, it follows that the Euclidean norm of each point on the unit sphere around the origin 
        must be $O(\e)$-preserved, which means that the dimension of any embedding given by a linear map would not be any less than the original dimension.

\vspace{2mm}

\subsection{$k$-Distance }
\label{subsec:dtmbackground}
The distance to a finite point set $ P $ is usually taken to be the minimum distance to a point in the set. 
For the computations involved in geometric and topological inference, however, this distance is highly sensitive to outliers and noise.
To handle this problem of sensitivity, Chazal et al. in
\cite{DBLP:journals/focm/ChazalCM11} introduced the \emph{distance to a
probability measure} which, in the case of a uniform probability on
$P$, is called the \emph{$k$-distance}.

	\begin{definition}[$k$-distance]
	For $ k \in\{1,...,n\} $ and $ x \in \mathbb{R}^D $, the $ k
        $-distance of $x$ to $P$ is
		\begin{equation}
		d_{P,k}(x)= \min_{S_k\in
                  {P\choose k }} \sqrt{\dfrac{1}{k}\sum_{p \in
                    S_k}\|x-p\|^2}=\sqrt{\dfrac{1}{k}\sum_{p \in
                    \text{NN}_{P}^k(x)}\|x-p\|^2} 
		\end{equation} 
		where $ \text{NN}^{k}_{P}(x) \subset P $ denotes the $
                k $ nearest neighbours in $ P $ to the point $ x \in
                \mathbb{R}^{D} $.
\label{def:k-distance}
	\end{definition}

It was shown in~\cite{DBLP:journals/dcg/Aurenhammer90}, that the $ k $-distance can be
expressed in terms of weighted points and \emph{power distance}. 
A weighted point $\hat{p}$ is a point $p$ of $\R^D$ together with a (not necessarily positive) 
real number called its weight and denoted by $w(p)$.
The \emph{power distance} between a point $x\in \R^D$ and a weighted point $\hat{p}=(p,w(p))$, denoted by 
$D(x,\hat{p})$ is $\|x-p\|^2-w(p)$, i.e. the power of $x$ with respect to a ball of radius $\sqrt{w(p)}$ centered at $p$.
The distance between two weighted points $\hat{p}_i=(p_i,w(i))$ and
  $\hat{p}_j=(p_j,w(j))$ is defined as
  $D(\hat{p}_i,\hat{p}_j)=\|p_i-p_j\|^2 - w(i)- w(j)$.  This definition
  encompasses the case where the two weights
  are 0, in which case we have the squared Euclidean distance, and the case where one of the points has weight 0, in which
  case, we have the power distance of a point to a ball.  We say
  that two weighted points are {\em orthogonal} when
  their weighted distance is zero.

\noindent Let $ \bary_{P,k} $ be the set of iso-barycentres of all subsets
of $ k $ points in $ P $. To each barycenter $b\in \bary_{P,k}$, $b=
(1/k) \sum_{i}p_{i} $, we associate the weight $ w(b)=-
\frac{1}{k} \sum_{i}\|b-p_i\|^2 $. Note that, despite the notation, this weight does not only depend on $b$, but also on the set of points in $P$ for which 
$b$ is the barycenter. 
Writing $\wbary_{P,k}= \{ \hat{b}=(b, w(b)), b\in \bary_{P,k}\}$, we see from~\eqref{eqn:basic-var-id2} that the $k$-distance is 
the square root of a
power distance~\cite{DBLP:journals/dcg/Aurenhammer90}
	\begin{equation} \label{eq:dppx-power-dist-defn}
	d_{P,k}(x) = \min_{\hat{b}\in \wbary_{P,k}} \sqrt{D(x,\hat{b})}.
	\end{equation} 
	Observe that in general the squared distance between a pair of weighted points can be negative, but 
	the above assignment of weights ensures that the $k$-distance $d_{P,k}$ is a real function. 
	Since $ d_{P,k} $ is the square root of a
        non-negative power distance, the $\alpha$-sublevel set of $ d_{P,k} $,
        ${d}_{P,k}([-\infty, \alpha])$, $\alpha\in \R$,  is the
        union of $ n\choose k $ balls 
        $B(b, \sqrt{\alpha^2 + w(b)})$,
        $b\in \bary_{P,k}$.
	However, some of the balls may  be included  in the union of
        others and be redundant. In fact, the number of  barycenters (or equivalently of balls) required
        to define a level set of $ d_{P,k} $ is equal to the number of
        the non-empty cells in the $ k $th-order 
	Voronoi diagram of $P$. Hence the number of non-empty cells is 
	$ \Omega \left( n^{\left \lfloor (D+1)/2 \right  \rfloor}
\right) $
        \cite{DBLP:journals/dcg/ClarksonS89} and computing them 
in high dimensions is intractable. It is then natural to look for
approximations of the $k$-distance, as 
	proposed  in \cite{DBLP:journals/comgeo/BuchetCOS16}.
	
	\begin{definition}[Approximation] Let $ P \subset
    \mathbb{R}^{D} $ and $ x\in \mathbb{R}^{D} $. The  approximate
    $k$-distance $ \tilde{d}_{P,k}(x) $ is defined as
		\begin{eqnarray}
			\tilde{d}_{P,k}(x) &:=& \min _{{p}\in {P}}\sqrt{ D(x,\hat{p})    }
\label{eq:dppx-approx}
		\end{eqnarray}
where $\hat{p}=(p,w(p))$ with $w(p)= -d^2_{P,k}(p) $, the negative of
the squared $ k $-distance of $ p $.
\label{def:approximation}
	\end{definition}

In other words, we replace the set of barycenters with $P$. 
As in the exact case, $\tilde{d}_{P,k}$ is  the square root of a power
distance and its $\alpha$-sublevel set, $\alpha\in \R$, is a union of balls,
specifically  the balls  $B(p,
\sqrt{\alpha^2-d_{P,k}^2(p)})$, $p\in P$. The major difference with the
exact case is that, since we consider only balls around the points
of $P$, their number is $n$ instead of $ n\choose k $ in the
exact case (compare Eq.~(\ref{eq:dppx-approx})  and Eq.~\eqref{eq:dppx-power-dist-defn}).
Still, $\tilde{d}_{P,k}(x)$
        approximates the $k$-distance  
        \cite{DBLP:journals/comgeo/BuchetCOS16}:
	\begin{eqnarray}
	    \dfrac{1}{\sqrt{2}} \ d_{P,k} \le \tilde{d}_{P,k} \le \sqrt{3} \ d_{P,k}. \label{eqn:approx-k-dist-approx} 
	\end{eqnarray}
	
        We now make an observation for the case when the weighted points are barycenters, which will be useful in proving our main theorem.
	\begin{lemma}
        \label{l:pow-dist-pairwise-dist}
        If $b_1,b_2 \in \bary_{P,k}$, and $p_{i,1},\ldots,p_{i,k} \in P$ for $i=1,2$, such that $b_i = \frac{1}{k}\sum_{l=1}^k p_{i,l}$, 
        and $w(b_i) = \frac{1}{k}\sum_{l=1}^k\|b_i-p_{i,l}\|^2$ for $i=1,2$, then
            \[D(\hat{b}_1,\hat{b}_2) = \dfrac{1}{k^2}\sum_{l,s=1}^k \|p_{1,l}-p_{2,s}\|^2.\]
	\end{lemma}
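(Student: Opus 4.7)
The plan is to prove the identity by applying the variance identity \eqref{eqn:basic-var-id2} twice in succession, which gives a very clean derivation avoiding brute-force expansion.

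First I would unfold the definition of the power distance between two weighted barycenters:
\[
  D(\hat{b}_1,\hat{b}_2) = \|b_1-b_2\|^2 - w(b_1) - w(b_2),
\]
where by the assignment given just before the lemma, $-w(b_i) = \tfrac{1}{k}\sum_{l=1}^k \|b_i - p_{i,l}\|^2$ (matching equation \eqref{eqn:basic-var-id2} with the $p_{i,l}$'s and their barycenter $b_i$).

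Next, starting from the right-hand side $\tfrac{1}{k^2}\sum_{l,s}\|p_{1,l}-p_{2,s}\|^2$, I would apply \eqref{eqn:basic-var-id2} with respect to the points $p_{1,1},\dots,p_{1,k}$ (whose barycenter is $b_1$), taking $x = p_{2,s}$. This gives, for each fixed $s$,
\[
  \frac{1}{k}\sum_{l=1}^k \|p_{2,s}-p_{1,l}\|^2 \;=\; \|p_{2,s}-b_1\|^2 + \frac{1}{k}\sum_{l=1}^k \|b_1-p_{1,l}\|^2 \;=\; \|p_{2,s}-b_1\|^2 - w(b_1).
\]
Averaging this over $s=1,\dots,k$ yields
\[
  \frac{1}{k^2}\sum_{l,s=1}^k\|p_{1,l}-p_{2,s}\|^2 \;=\; \frac{1}{k}\sum_{s=1}^k\|p_{2,s}-b_1\|^2 \;-\; w(b_1).
\]

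Then I would apply \eqref{eqn:basic-var-id2} a second time, now to the points $p_{2,1},\dots,p_{2,k}$ (whose barycenter is $b_2$), with $x = b_1$, to obtain
\[
  \frac{1}{k}\sum_{s=1}^k \|b_1-p_{2,s}\|^2 = \|b_1-b_2\|^2 + \frac{1}{k}\sum_{s=1}^k \|b_2-p_{2,s}\|^2 = \|b_1-b_2\|^2 - w(b_2).
\]
Substituting back collapses everything to $\|b_1-b_2\|^2 - w(b_1) - w(b_2) = D(\hat{b}_1,\hat{b}_2)$, which is the claim.

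There is no real obstacle here — the only thing to be careful about is the sign convention in the weight assignment (the weights are negative of the mean squared deviation), and matching it correctly against the definition $D(\hat{p}_i,\hat{p}_j)=\|p_i-p_j\|^2 - w(i) - w(j)$ so that the $-w(b_i)$ terms come out as $+\tfrac{1}{k}\sum\|b_i-p_{i,l}\|^2$. Once the two applications of \eqref{eqn:basic-var-id2} are set up in the right order, the rest is bookkeeping.
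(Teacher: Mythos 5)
Your proof is correct and is essentially the paper's own argument: both consist of two applications of the barycentric identity~\eqref{eqn:basic-var-id2} (once with respect to $b_1$ over the $p_{1,l}$ and once with respect to $b_2$ over the $p_{2,s}$), with yours merely run from the right-hand side toward $D(\hat{b}_1,\hat{b}_2)$ instead of the other way around. You also correctly handle the sign convention $w(b_i)=-\tfrac{1}{k}\sum_l\|b_i-p_{i,l}\|^2$, which the lemma statement itself misstates but the paper's proof implicitly uses.
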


        \begin{proof}
        We have 
        \[D(\hat{b}_1,\hat{b}_2) \;\;=\;\; \|b_1-b_2\|^2-w(b_1)-w(b_2) 
                                 \;\;=\;\; \|b_1-b_2\|^2+\dfrac{1}{k}\sum_{l=1}^k \|b_1-p_{1,l}\|^2+\dfrac{1}{k}\sum_{l=1}^k \|b_2-p_{2,l}\|^2.\]
        Applying the identity~\eqref{eqn:basic-var-id2}, 
        we get $\|b_1-b_2\|^2  +\dfrac{1}{k}\sum_{l=1}^k \|b_2-p_{2,l}\|^2 = \dfrac{1}{k}\sum_{l=1}^k\|b_1-p_{2,l}\|^2$, so 
        that 
        \begin{eqnarray}
           D(\hat{b}_1,\hat{b}_2) &=& \dfrac{1}{k}\sum_{l=1}^k \|b_1-p_{2,l}\|^2 + \dfrac{1}{k}\sum_{l=1}^k \|b_1-p_{1,l}\|^2 \notag \\
           &=& \dfrac{1}{k}\sum_{l=1}^k \|b_1-p_{2,l}\|^2 + \dfrac{1}{k^2}\sum_{s=1}^k\sum_{l=1}^k \|b_1-p_{1,l}\|^2 \notag \\
           &=& \dfrac{1}{k}\sum_{l=1}^k \pth{\|b_1-p_{2,l}\|^2 + \dfrac{1}{k}\sum_{s=1}^k \|b_1-p_{1,s}\|^2} \notag \\
           &=& \dfrac{1}{k}\sum_{l=1}^k \pth{\dfrac{1}{k}\sum_{s=1}^k\|p_{1,s}-p_{2,l}\|^2} \;\;=\;\;   
               \dfrac{1}{k^2}\sum_{l,s=1}^k \|p_{1,s}-p_{2,l}\|^2,  \label{eqn:var-ineq-bi}
        \end{eqnarray}
        where in~\eqref{eqn:var-ineq-bi}, we again applied ~\eqref{eqn:basic-var-id2} to each of the points $p_{2,s}$, with respect to the barycenter $b_1$.
        \end{proof}

	\subsection{Persistent Homology}
	\label{subsec:ph}

\noindent {\bf Simplicial Complexes and Filtrations} 
Let $V$ be a finite set. 
      An (abstract) simplicial complex with vertex set $V$ is a set $ K $ of finite
      subsets of $V$  such that if $ A \in K $ and $ B \subseteq A$,
then $ B \in K $. The sets in $ K $ are called
      the simplices of $ K $. A simplex $F \in K$ that is strictly contained in a simplex 
      $A\in K$, is said to be a \emph{face} of $A$.

	A simplicial complex $ K $ with a function $ f: K \to
        \mathbb{R} $ such that $ f(\sigma) \le f(\tau) $ whenever
        $\sigma$ is a face of $\tau$ is a filtered simplicial
        complex. The sublevel set of $f$ at $ r \in \mathbb{R}$, $
        \mathnormal{f}^{-1}\left(-\infty,r \right] $, is a subcomplex of $ K $. 
	By considering different values of $ r $, we get a nested
        sequence of subcomplexes (called a filtration) of $ K $, $ \emptyset= K^0\subseteq K^1 \subseteq ... \subseteq K^m=K $, where $ K^{i} $ is the sublevel set at value $ r_i $. 

The        \Cech filtration associated to  a finite set $P$ of points
in $\mathbb{R}^D $ plays an important role in Topological Data Analysis.

	\begin{definition}[\Cech Complex]
 The \Cech complex $\check{C}_\alpha(P)$ is the set of simplices $\sigma\subset P$ such that $\rad(\sigma$) $\le$ $\alpha$, where $\rad(\sigma)$ is the radius of the smallest 
	enclosing ball of $ \sigma $, i.e.
\[\rad(\sigma) = \min_{x\in\R^D} \max_{p_i \in \sigma} \|x-p_i\| .\]
\end{definition}
When the threshold $\alpha$ goes from $0$ to $+\infty$, we obtain the \Cech
filtration of $P$.
$\check{C}_\alpha(P)$ can be equivalently defined as the nerve  of the
closed balls
          $\overline{B}(p,\alpha)$, centered at the points in $P$ and
          of radius $\alpha$:  \[ \check{C}_\alpha(P)
          = \{ \sigma \subset P | \cap_{p \in
            \sigma}\overline{B}(p,\alpha) \neq \emptyset \}. \] 
	By the nerve lemma, we know that the union of balls
        $U_\alpha =\cup_{p\in P} \overline{B}(p,\alpha) $, 
	and $ \check{C}_\alpha(P) $ have the same homotopy type.

\vspace{2mm}

\noindent {\bf Persistence Diagrams.} 
Persistent homology is a means to compute and record the changes in the topology of the filtered complexes as 
the parameter $\alpha$ increases from zero to infinity. Edelsbrunner, Letscher and Zomorodian~\cite{DBLP:journals/dcg/EdelsbrunnerLZ02} 
gave an algorithm to compute the persistent homology, which takes a filtered simplicial complex as input, and 
outputs a sequence $(\alpha_{birth},\alpha_{death})$ of pairs of real numbers. Each such pair corresponds to a topological feature, and records the values 
of $\alpha$ at which the feature appears and disappears, respectively, in the filtration. Thus the topological features 
of the filtration can be represented using this sequence of pairs, which can be represented either as points in the 
extended plane $\bar{\R}^2 = \pth{\R\cup \{-\infty,\infty\}}^2$, 
called the \emph{persistence diagram}, or as a sequence of barcodes (the \emph{persistence barcode}) (see, e.g.,~\cite{DBLP:books/daglib/0025666}).
A pair of persistence diagrams $\mathbb{G}$ and $\mathbb{H}$ corresponding to the filtrations $(G_\alpha)$ and $(H_\alpha)$ respectively,
are \emph{multiplicatively $\beta$-interleaved}, $(\beta \geq 1)$, if for all $\alpha$, we have that
$G_{\alpha/\beta}  \subseteq H_{\alpha} \subseteq G_{\alpha\beta}$. We shall crucially rely on the fact that a given 
persistence diagram is closely approximated by another one if they are multiplicatively $c$-interleaved, with $c$ close to $1$ 
(see e.g.~\cite{DBLP:books/daglib/0039900}).

	The Persistent Nerve Lemma \cite{DBLP:conf/compgeom/ChazalO08} shows that the persistent homology of the \Cech complex is the 
	same as the homology of the $ \alpha $-sublevel filtration of the distance function.

\vspace{2mm}

\noindent {\bf The Weighted Case.}
        Our goal is to extend the above definitions and results to the case of the
        $k$-distance. As we
        observed earlier, the $k$-distance is a 
	power distance in
        disguise. Accordingly, we need to extend the definition of the
        \Cech complex to sets of weighted points.

	\begin{definition}[Weighted \Cech Complex]
          Let $\wp = \{ \hat{p}_1,...,\hat{p}_n\}$  be a set of weighted points, where
          $\hat{p}_i=(p_i,w(i))$. The
          $\alpha$-\Cech complex of $\wp$, $ \check{C}_\alpha(\wp)$,
          is  the  set of all simplices $\sigma$ satisfying
\[
         \exists x, \; \forall
           p_i\in \sigma, \; \|x-p_i\|^2 \leq
           w(i)+\alpha^2 \;\;\; \Leftrightarrow \;\;\; \exists x, \; \forall
           p_i\in \sigma, \; D(x,\hat{p}_i)
           \leq \alpha^2.\]
In other words,  
 the       $\alpha$-\Cech complex of $\wp$ is  the nerve of the closed
           balls
           $\overline{B}(p_i, r_i^2=w(i)+\alpha ^2)$, centered at the
           $p_i$ and of squared radius $w(i)+\alpha ^2$ (if negative,
           $\overline{B}(p_i, r_i^2)$ is
           imaginary). 
         \end{definition}

        The notions of weighted \Cech filtrations and their persistent homology now follow naturally.
Moreover, it follows from ~\eqref{eq:dppx-power-dist-defn} that the \Cech complex $
                \check{C}_{\alpha}(P)$ for the $k$-distance  is
                identical to  the weighted \Cech complex $
                \check{C}_{\alpha}(\wbary_{P,k}) $, where
                $\wbary_{P,k}$ is, as above, the set of
                iso-barycenters of all subsets of $k$ points in $P$.

	In the Euclidean case, we equivalently defined the $\alpha$-\Cech complex as the collection of simplices
        whose smallest enclosing balls have radius at most
        $\alpha$. We can proceed similarly in the weighted case.
Let $\hat{X}\subseteq \hat{P}$. We define the  {\em squared radius of   $\hat{X}$} as
\[\rad ^2 (\hat{X}) = \min_{x\in \R^{D}} \max_{\hat{p}_i\in \hat{X}} D({x},\hat{p}_i),\]
and the weighted center or simply the \emph{center} of $\hat{X}$  as the point, noted $c (\hat{X})$, where the
minimum is reached. 

Our goal is to show that preserving smallest enclosing balls in the weighted scenario under a given mapping, also preserves the 
persistent homology. Sheehy~\cite{DBLP:conf/compgeom/Sheehy14} and Lotz~\cite{doi:10.1098/rspa.2019.0081}, 
proved this for the unweighted case. Their proofs also work for the
weighted case but only under the assumption that 
the weights stay unchanged under the mapping. In our case however, the weights need to be recomputed in $f(\hat P)$. 
We therefore need a version of~\cite[Lemma 2.2]{doi:10.1098/rspa.2019.0081} for the weighted case which 
does not assume that the weights stay the same under $f$. This is Lemma~\ref{lotzlemmawtd}, which follows at the end of this 
section. 
The following lemmas will be instrumental in proving Lemma~\ref{lotzlemmawtd} and in proving our main result. 
Let $\hat{X}\subseteq
\hat{P}$ and assume without loss of generality that $\hat{X}= \{ \hat{p}_1,...,\hat{p}_m\}$,
where $\hat{p}_i=(p_i,w(i))$.

	\begin{lemma} \label{l:meb-uniq}
               $c(\hat{X})$ and $\rad(\hat{X})$ are uniquely defined.
        \end{lemma}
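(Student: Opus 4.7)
The plan is to show uniqueness via strict convexity of the objective function defining $\rad^2(\hat{X})$, together with coercivity to secure existence of the minimizer. Writing
\[ g(x) := \max_{\hat{p}_i \in \hat{X}} D(x,\hat{p}_i) = \max_{1\le i \le m} \pth{\|x-p_i\|^2 - w(i)}, \]
we have $\rad^2(\hat{X}) = \min_{x\in \R^D} g(x)$ and $c(\hat{X}) = \argmin_x g(x)$, so it suffices to prove that $g$ attains its minimum at a unique point.

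First I would establish that $g$ is strictly convex. Each summand $x \mapsto \|x-p_i\|^2 - w(i)$ is a strictly convex function of $x$, since it differs from $\|x\|^2$ only by an affine term. Then I would verify the simple fact that the pointwise maximum of finitely many strictly convex functions is strictly convex: given $x\neq y$ and $\lambda \in (0,1)$, if the maximum at $\lambda x + (1-\lambda)y$ is attained at index $i_0$, then
\[ g(\lambda x + (1-\lambda)y) = D(\lambda x + (1-\lambda)y, \hat{p}_{i_0}) < \lambda D(x,\hat{p}_{i_0}) + (1-\lambda) D(y,\hat{p}_{i_0}) \le \lambda g(x) + (1-\lambda) g(y), \]
by strict convexity of $D(\cdot,\hat{p}_{i_0})$ and the definition of $g$.

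Next, existence of a minimizer follows because $g$ is continuous and coercive: for any fixed index $i$, $g(x) \ge \|x-p_i\|^2 - w(i) \to \infty$ as $\|x\|\to \infty$. Hence the sublevel set $\{x : g(x) \le g(0)\}$ is a nonempty compact set on which the continuous function $g$ attains its minimum. Finally, uniqueness follows from strict convexity: if $c_1 \neq c_2$ both attained the minimum, then the midpoint $c = (c_1+c_2)/2$ would satisfy $g(c) < \tfrac{1}{2}(g(c_1) + g(c_2)) = \min g$, a contradiction. Thus $c(\hat{X})$ is unique, and consequently so is $\rad^2(\hat{X}) = g(c(\hat{X}))$.

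There is essentially no obstacle here; the only point requiring a small amount of care is checking that strict convexity is preserved under the pointwise maximum (the usual statement for convex functions only gives convexity, not strict convexity, in general, but it does go through because the maximum is attained by at least one index at every point). The whole argument is otherwise a direct application of standard convex analysis, made possible by the fact that the power distance $D(x,\hat{p}_i)$ differs from $\|x-p_i\|^2$ by a quantity independent of $x$.
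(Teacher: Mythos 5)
Your proof is correct and rests on the same idea as the paper's: the power distance $D(x,\hat{p}_i)$ is $\|x\|^2$ plus an affine function of $x$, hence strictly convex, so any two distinct minimizers of the max would be beaten by a point on the segment between them (the paper carries out exactly this computation, exhibiting the excess $\lambda(1-\lambda)\|c_0-c_1\|^2$, as a proof by contradiction). Your version additionally supplies the coercivity/existence step, which the paper leaves implicit.
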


\begin{proof}[Proof of Lemma~\ref{l:meb-uniq}] 
The proof follows from the convexity of $D$ (see Lemma~\ref{l:pow-dist-pairwise-dist}).
Assume, for a contradiction, that there exists two centers $c_0$ and
$c_1\neq c_0$ for $\hat{X}$. For convenience, write $r=
\rad(\hat{X})$. By the definition of the center of $\hat{X}$, we have
\begin{eqnarray*}
 \exists \hat{p}_0 , \forall \hat{p}_i : D(c_0, \hat{p}_i) & \leq &D(c_0, \hat{p}_0) = \| c_0-p_0\|^2 - w(0) =
r^2 \\
 \exists \hat{p}_1, \forall \hat{p}_i : D(c_1, \hat{p}_i) & \leq & D(c_1, \hat{p}_1) = \| c_1-p_1\|^2 - w(1) =
r^2. 
\end{eqnarray*}
Consider $D_{\lambda} (\hat{p}_i)=(1-\lambda)D(c_0,\hat{p}_i) + \lambda
D(c_1,\hat{p}_i)$ and  write $c_{\lambda}
=(1-\lambda ) c_0 + \lambda c_1 $. For any $\lambda \in (0,1)$, we have
\begin{eqnarray*}
D_{\lambda} (\hat{p}_i) & = & (1-\lambda)D(c_0,\hat{p}_i) + \lambda
D(c_1,\hat{p}_i)\\
& = & (1-\lambda)(c_0-p_i)^2 + \lambda (c_1-p_i)^2 -w(i) \\
& = & D(c_{\lambda},\hat{p}_i) - c_{\lambda}^2 +(1-\lambda)c_0^2
+\lambda c_1^2 \\
& = & D(c_{\lambda},\hat{p}_i)  + \lambda (1-\lambda) (c_0-c_1)^2\\
& > & D(c_{\lambda},\hat{p}_i).
\end{eqnarray*}

Moreover, for any $i$,
\[
D_{\lambda} (\hat{p}_i)=(1-\lambda)D(c_0,\hat{p}_i) + \lambda
D(c_1,\hat{p}_i) \leq r^2.\]

Thus, for any $i$  and any $\lambda\in (0,1)$,  $D
(c_{\lambda},\hat{p}_i) < r^2$. Hence
$c_{\lambda}$ is a better center than $c_0$ and $c_1$, and $r$ is not the minimal possible value for $\rad (\hat{X})$. We have obtained a contradiction.
\end{proof}

	\begin{lemma} \label{l:meb-conv-comb}
                  Let $I$ be the set of indices for which
                  $D(c,\hat{p}_i) = \rad^2(\hat{X})$ and let
                  $\hat{X}_I=\{\hat{p}_i, i\in I\}$.
Then there exist $(\lambda_i > 0)_{i\in I}$ such that $c({\hat{X}})= \sum_{i\in I}\lambda_i{p}_i $ with $ \sum_{i\in I}\lambda_i=1 $. 
		
	\end{lemma}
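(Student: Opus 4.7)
The plan is to argue by contradiction, combining the separating hyperplane theorem with a small perturbation of the center that exploits its minimality. Write $c := c(\hat{X})$ and $r := \rad(\hat{X})$, and suppose for contradiction that $c \notin \conv\{p_i : i \in I\}$. Since the right-hand side is a compact convex set not containing $c$, strict separation yields a unit vector $u \in \R^D$ and a constant $\delta > 0$ such that $\langle u, p_i - c \rangle \geq \delta$ for every $i \in I$.

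I would then define the perturbed center $c_t := c + tu$ for small $t > 0$ and compute, using $\|u\|=1$,
\begin{equation*}
D(c_t, \hat{p}_i) \;=\; \|c - p_i\|^2 - 2t\langle u, p_i - c\rangle + t^2 - w(i) \;\leq\; D(c, \hat{p}_i) - 2t\delta + t^2.
\end{equation*}
For $t \in (0, 2\delta)$ this gives $D(c_t, \hat{p}_i) < D(c, \hat{p}_i) = r^2$ for every active index $i \in I$. For the finitely many inactive indices $j \notin I$, the strict inequality $D(c, \hat{p}_j) < r^2$ together with continuity of $x \mapsto D(x, \hat{p}_j)$ implies that the same inequality persists in a neighbourhood of $c$; shrinking $t$ if necessary, $D(c_t, \hat{p}_j) < r^2$ holds for all such $j$. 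Consequently $\max_i D(c_t, \hat{p}_i) < r^2$, contradicting the fact that $c$ realises the minimum in the definition of $\rad^2(\hat{X})$ and the uniqueness of the center given by Lemma~\ref{l:meb-uniq}.

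Hence $c \in \conv\{p_i : i \in I\}$, which immediately produces nonnegative coefficients $\lambda_i \geq 0$ with $\sum_{i \in I}\lambda_i = 1$ such that $c = \sum_{i \in I}\lambda_i p_i$; discarding the indices with $\lambda_i = 0$ yields the strictly positive convex combination asserted in the statement. An equivalent, more slick route would be to invoke Danskin's theorem for the convex function $g(x) := \max_i D(x, \hat{p}_i)$: its subdifferential at the minimiser $c$ equals $\conv\{2(c - p_i) : i \in I\}$, and the first-order optimality condition $0 \in \partial g(c)$ yields the convex-combination identity directly. The only subtlety, which I do not anticipate as a real obstacle, is ensuring that a single perturbation direction decreases the power distance simultaneously for every active index---and this is precisely what separation from $\conv\{p_i : i \in I\}$ provides, with the inactive indices handled for free by continuity.
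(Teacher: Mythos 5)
Your proof is correct and takes essentially the same route as the paper: the paper also argues by contradiction, perturbing $c$ toward $\conv\{p_i : i\in I\}$ (along the segment to the nearest point of the hull rather than along an explicit separating direction) so that all active power distances strictly decrease, and handles the inactive indices by taking the perturbation small enough. Your explicit quantitative bound $D(c_t,\hat{p}_i)\le D(c,\hat{p}_i)-2t\delta+t^2$ and your remark about restricting to the support to get $\lambda_i>0$ are, if anything, slightly more careful than the paper's version.
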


\begin{proof}[Proof of Lemma~\ref{l:meb-conv-comb}]
 We write for convenience $c=c(\hat{X})$ 
  and $r=\rad(\hat{X})$ and prove that $c\in \conv (X_I)$ by contradiction.  Let  $c'\neq
c$ be the point of $\conv
(X_I)$ closest to $c$, and $\tilde{c}\neq c$ be a point on $[cc']$.  Since $\| \tilde{c}-p_i\| < \| c-p_i\|$ for all $i\in I$, 
 $D(\tilde{c},\hat{p}_i) < D(c, \hat{p}_i)$ for all $i\in I$. 
For
 $\tilde{c}$ sufficiently close to $c$, $\tilde{c}$ remains closer to
 the weighted points $\hat{p}_j$, $j\not\in I$, than to the
 $\hat{p}_i$, $i\in I$. We thus  have
 \[ D(\tilde{c},\hat{p}_j) < D(\tilde{c},\hat{p}_i) < D(c,
 \hat{p}_i)=r^2.\]
It  follows that $c$ is not the center of $\hat{X}$, a contradiction.

\end{proof}

	Combining the above results with~\cite[Lemma 4.2]{doi:10.1098/rspa.2019.0081} gives the following lemma.
	\begin{lemma} \label{l:meb-rad-conv-comb}
         Let $I$, $(\lambda_i)_{i\in I}$ be as in Lemma~\ref{l:meb-conv-comb}. Then the following holds.
		\[\rad^2(\hat{X}) =
                \dfrac{1}{2}\sum_{i\in I}\sum_{j\in I} \lambda_i\lambda_j D(\hat{p}_i,\hat{p}_j).\]
	\end{lemma}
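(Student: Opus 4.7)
The plan is to reduce the weighted identity to a standard variance-type identity in the unweighted case (which is exactly \cite[Lemma 4.2]{doi:10.1098/rspa.2019.0081}), and then carefully account for the weights.

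First, I would use Lemma~\ref{l:meb-conv-comb} to write $c := c(\hat{X}) = \sum_{i\in I}\lambda_i p_i$ with $\lambda_i > 0$ and $\sum_{i\in I}\lambda_i = 1$, and use the defining property of $I$ to observe that $D(c,\hat{p}_i) = r^2 := \rad^2(\hat{X})$ for every $i\in I$. Taking the convex combination with the $\lambda_i$'s yields
\[
r^2 \;=\; \sum_{i\in I}\lambda_i\, r^2 \;=\; \sum_{i\in I}\lambda_i D(c,\hat{p}_i) \;=\; \sum_{i\in I}\lambda_i\|c-p_i\|^2 \;-\; \sum_{i\in I}\lambda_i w(i).
\]

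The main technical step is to convert $\sum_i \lambda_i \|c-p_i\|^2$ into the desired symmetric sum of pairwise squared distances. This is the unweighted case of the identity, and follows from expanding $\|p_i-p_j\|^2 = \|p_i-c\|^2 + \|c-p_j\|^2 + 2\langle p_i-c, c-p_j\rangle$ and summing against $\lambda_i\lambda_j$: the cross term vanishes because $\sum_i \lambda_i(p_i-c) = 0$, and what remains is
\[
\tfrac{1}{2}\sum_{i,j\in I}\lambda_i\lambda_j\|p_i-p_j\|^2 \;=\; \sum_{i\in I}\lambda_i\|c-p_i\|^2,
\]
which is exactly the content of~\cite[Lemma 4.2]{doi:10.1098/rspa.2019.0081} cited above.

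Finally, I would absorb the weight terms into the symmetric sum. Using $\sum_{j\in I}\lambda_j = 1$, the weight term can be symmetrised as
\[
\sum_{i\in I}\lambda_i w(i) \;=\; \tfrac{1}{2}\sum_{i,j\in I}\lambda_i\lambda_j\bigl(w(i)+w(j)\bigr).
\]
Combining this with the previous identity and recalling $D(\hat{p}_i,\hat{p}_j) = \|p_i-p_j\|^2 - w(i) - w(j)$ gives
\[
r^2 \;=\; \tfrac{1}{2}\sum_{i,j\in I}\lambda_i\lambda_j\|p_i-p_j\|^2 - \tfrac{1}{2}\sum_{i,j\in I}\lambda_i\lambda_j\bigl(w(i)+w(j)\bigr) \;=\; \tfrac{1}{2}\sum_{i,j\in I}\lambda_i\lambda_j D(\hat{p}_i,\hat{p}_j),
\]
which is the claim. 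There is no genuine obstacle here; the only subtlety is keeping careful track of the weight bookkeeping, since the weights break the symmetry that makes the unweighted identity transparent, and symmetrising correctly via $\sum_j \lambda_j = 1$ is what makes everything collapse into the single symmetric double sum of pairwise power distances.
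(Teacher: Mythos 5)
Your proof is correct and follows essentially the same route as the paper's: both write $\rad^2(\hat{X}) = \sum_{i\in I}\lambda_i\bigl(\|c-p_i\|^2 - w(i)\bigr)$ via the convex combination from Lemma~\ref{l:meb-conv-comb}, invoke the identity $\sum_{i\in I}\lambda_i\|c-p_i\|^2 = \tfrac{1}{2}\sum_{i,j\in I}\lambda_i\lambda_j\|p_i-p_j\|^2$ (which the paper cites from Lotz and proves probabilistically in the appendix, while you prove it by expanding the inner product), and then symmetrise the weight term using $\sum_{j\in I}\lambda_j = 1$. No gaps.
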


\begin{proof}[Proof of Lemma~\ref{l:meb-rad-conv-comb}]
		From Lemma~\ref{l:meb-conv-comb}, and writing
                $c=c(\hat{X})$ for convenience, we have
		\[ \rad^2(\hat{X})= \sum_{i\in I}\lambda_i\big(\|c-p_i\|^2 - w(i)\big).\]
		We use the following simple fact from~\cite[Lemma 4.5]{doi:10.1098/rspa.2019.0081} (a probabilistic 
                proof is included in the Appendix, Lemma~\ref{l:simp-fact}). 
		\[ \sum_{i\in I}\lambda_i\|c-p_i\|^2 = \dfrac{1}{2}\sum_{i\in I}\sum_{j\in I}\lambda_i\lambda_j\|p_i-p_j\|^2. \]
		Substituting in the expression for $\rad^2(\hat{X})$, 
		\begin{eqnarray}
			\rad^2(\hat{X})   
				     & = &\dfrac{1}{2}\sum_{j\in I}\sum_{i\in I}\lambda_j \lambda_i\|p_i-p_j\|^2 - \dfrac{1}{2}\sum_{i\in I}2\lambda_iw(i) 
				                              \notag \\ 
				     & = &\dfrac{1}{2}\sum_{i,j\in I}\lambda_j \lambda_i\|p_i-p_j\|^2 - \dfrac{1}{2}\sum_{i,j\in I}2\lambda_i\lambda_j w(i) 
				          \;\; \mbox{(since }\; \sum_{j\in I} \lambda_j=1)                        \notag \\ 
				     & = &\dfrac{1}{2}\sum_{i,j\in I}\lambda_j \lambda_i\|p_i-p_j\|^2 - \dfrac{1}{2}\sum_{i,j\in I}\lambda_i\lambda_j (w(i) + w(j))
				                                   \notag \\ 
				     & = &\dfrac{1}{2}\sum_{i,j\in I}\lambda_i\lambda_j\pth{\|p_i-p_j\|^2 - w(i)-w(j) } 
                                     \;\; =\;\; \dfrac{1}{2}\sum_{i,j\in I}\lambda_i\lambda_j D(\hat{p}_i,\hat{p}_j) \notag.
		\end{eqnarray}
\end{proof}

Let $X\in \R^D$ be a finite set of points and $\hat{X}$ be the
associated weighted points where   
the weights are computed according to 
a weighting function $w: X \to \R^-$. Given a
mapping $f:\R^D\to \R^d$, we define $\widehat{f(X)}$ as the set of
weighted points $\{ (f(x), w(f(x))), x\in X\}$. Note that the weights
are recomputed in the image space $\R^d$. 
	\begin{lemma}
		\label{lotzlemmawtd}
		In the above setting, if $ f$ is such that for some
                $\e\in (0,1)$ and for all subsets $ \hat S
                \subseteq \hat{X} $ we have \hfill 
                \[ (1-\e)\rad^2(\hat S) \le
                \rad^2(\widehat{f(S)}) \le (1+\e)\rad^2(\hat S), \]
		then  
		the weighted \Cech filtrations of $ \hat X $ and $ f(\hat X) $
		are multiplicatively $ (1-\e)^{-1/2} $ interleaved.
	\end{lemma}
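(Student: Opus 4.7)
The plan is to translate the hypothesis on squared radii directly into an inclusion of Čech complexes at appropriately scaled parameters, which is exactly what a multiplicative interleaving requires.

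First I would record the radius-based characterisation of the weighted Čech complex: from the definition of $\check{C}_\alpha(\hat{X})$ as the nerve of the balls $\overline{B}(p_i,\sqrt{w(i)+\alpha^2})$, together with the definition of $\rad^2$, one has
\[
\sigma \in \check{C}_\alpha(\hat{X}) \;\;\Longleftrightarrow\;\; \rad^2(\hat\sigma) \leq \alpha^2,
\]
and similarly $f(\sigma) \in \check{C}_\alpha(\widehat{f(X)})$ iff $\rad^2(\widehat{f(\sigma)}) \leq \alpha^2$. This is just the weighted analogue of the unweighted equivalence used for the ordinary Čech complex, and it is what makes the hypothesis of the lemma meaningful for filtrations.

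Then I would set $\beta := (1-\e)^{-1/2}$ and verify the two required inclusions. For $\check{C}_{\alpha/\beta}(\hat X) \subseteq \check{C}_\alpha(\widehat{f(X)})$: if $\sigma$ lies in the left-hand side, then $\rad^2(\hat\sigma) \leq \alpha^2/\beta^2 = (1-\e)\alpha^2$, and the upper bound hypothesis gives
\[
\rad^2(\widehat{f(\sigma)}) \leq (1+\e)\rad^2(\hat\sigma) \leq (1+\e)(1-\e)\alpha^2 \leq \alpha^2,
\]
so $f(\sigma)$ sits in $\check{C}_\alpha(\widehat{f(X)})$. For the reverse inclusion $\check{C}_\alpha(\widehat{f(X)}) \subseteq \check{C}_{\alpha\beta}(\hat X)$: if $\rad^2(\widehat{f(\sigma)}) \leq \alpha^2$, the lower bound hypothesis yields
\[
\rad^2(\hat\sigma) \leq \frac{\rad^2(\widehat{f(\sigma)})}{1-\e} \leq \frac{\alpha^2}{1-\e} = (\alpha\beta)^2,
\]
so $\sigma \in \check{C}_{\alpha\beta}(\hat X)$.

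Since these inclusions hold for every $\alpha > 0$, the two filtrations are multiplicatively $\beta = (1-\e)^{-1/2}$ interleaved, as claimed. There is no genuine obstacle here once the radius-based characterisation is in place; the only point that deserves a sentence of justification is that the combinatorial labelling of simplices by subsets of $X$ is shared between $\hat X$ and $\widehat{f(X)}$ (so that ``$\sigma \mapsto f(\sigma)$'' makes sense as a bijection at the level of abstract simplices), which is immediate from the definition of $\widehat{f(X)}$ as $\{(f(x),w(f(x))) : x \in X\}$.
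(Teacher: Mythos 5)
Your proof is correct, and it is exactly the argument the paper intends: the paper states Lemma~\ref{lotzlemmawtd} without proof, deferring to the unweighted analogue (Lemma 2.2 of Lotz), and your translation of the radius hypothesis into the two inclusions $\check{C}_{\alpha/\beta}(\hat X)\subseteq \check{C}_{\alpha}(\widehat{f(X)})\subseteq \check{C}_{\alpha\beta}(\hat X)$ with $\beta=(1-\e)^{-1/2}$ is that standard argument, matching the paper's definition of multiplicative interleaving. Your closing remark about the shared combinatorial labelling of simplices (i.e.\ that $f$ induces a bijection on vertex sets) is the right point to flag, since it is what makes the inclusions into genuine maps of filtrations.
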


	\section{$\e$-Distortion maps preserve $k$-distance \Cech filtrations}
	\label{sec: results}

	For the subsequent theorems, we denote by $P$ a set of $n$ points in
        $\R^D$.

	Our first theorem shows that for the points in $P$, the
        pointwise $k$-distance $d_{P,k}$ is approximately preserved by a random subgaussian matrix satisfying Lemma~\ref{JLlemma}.

	\begin{theorem}\label{jlkdist} Given $\e \in \left(0,1\right]$, any $\e$-distortion map with respect to $P$ 
        $f \colon \mathbb{R}^D \to \mathbb{R}^d $, where $ d=O(\e^{-2}\log n)$ satisfies 
		for all points $ x \in P $:
		\[ (1-\e) d^2_{P,k}(x) \le d^2_{f(P),k}(f(x)) \le (1+\e)d^2_{P,k}(x). \]
	\end{theorem}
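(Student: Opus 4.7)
The plan is to prove Theorem~\ref{jlkdist} via a straightforward two-sided ``swap'' argument, leveraging only the definition of $d_{P,k}$ as a minimum of averaged squared distances over $k$-subsets, together with the pairwise distance preservation guaranteed by an $\e$-distortion map. First I would note that since $f$ is an $\e$-distortion map with respect to $P$, it is injective on $P$ (distinct points have positive image distance), so $|f(P)|=n=|P|$ and $f$ admits a well-defined inverse on $f(P)$.

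For the upper bound, I would choose $S_k := \mathrm{NN}_P^k(x) \subseteq P$ to be the $k$ nearest neighbors of $x$ in $P$, so that $d_{P,k}^2(x) = \frac{1}{k}\sum_{p \in S_k}\|x-p\|^2$. Since $f(S_k) \subseteq f(P)$ is a $k$-subset of $f(P)$, the minimum defining $d_{f(P),k}^2(f(x))$ is bounded above by $\frac{1}{k}\sum_{p\in S_k}\|f(x)-f(p)\|^2$; applying the $\e$-distortion bound term by term and squaring yields $d_{f(P),k}^2(f(x)) \leq (1+\e)^2 d_{P,k}^2(x)$.

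For the lower bound, I would symmetrically take $S'_k \subseteq f(P)$ achieving $d_{f(P),k}^2(f(x)) = \frac{1}{k}\sum_{q \in S'_k}\|f(x)-q\|^2$, lift it back to $f^{-1}(S'_k) \subseteq P$, and apply the lower distortion inequality to each $\|f(x)-q\| = \|f(x)-f(f^{-1}(q))\| \geq (1-\e)\|x-f^{-1}(q)\|$. Squaring and summing then gives $d_{f(P),k}^2(f(x)) \geq (1-\e)^2 d_{P,k}^2(x)$.

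The only mild subtlety — not really an obstacle — is that squaring the $(1\pm\e)$ distortion produces $(1\pm\e)^2$ rather than the $(1\pm\e)$ factors claimed. This is absorbed either by noting $(1-\e)^2 \geq 1 - 2\e \geq 1-\e'$ and $(1+\e)^2 \leq 1+3\e \leq 1+\e'$ after a harmless rescaling of $\e$, or by interpreting the statement as stated up to standard constants. I expect no deeper difficulty: because $x \in P$, the argument reduces purely to preservation of pairwise distances within $P$, and no statement about weighted points, smallest enclosing balls, or sublevel sets (which are the hard ingredients of later theorems) is needed here.
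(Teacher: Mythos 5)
Your proof is correct and is essentially the paper's argument made explicit: the paper's one-line proof also reduces the claim to the fact that $d_{P,k}^2(x)$ for $x\in P$ is a convex combination of squared pairwise distances within $P$, which any $\e$-distortion map preserves, and your two-sided swap over $k$-subsets merely spells out the min-exchange that the paper leaves implicit. Your remark about $(1\pm\e)^2$ versus $(1\pm\e)$ is also apt --- the paper's definition of an $\e$-distortion map controls unsquared distances while its proof asserts $(1\pm\e)$-preservation of squared distances, so the same harmless rescaling of $\e$ is silently needed there too.
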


	\begin{proof}[Proof of Theorem~\ref{jlkdist}]
	The proof follows from the observation that the squared $k$-distance from any point $p \in P$ to the set $P$, is 
	a convex combination of the squares of the Euclidean distances to the $k$ nearest neighbours of $p$. 
	Since the mapping in the JL Lemma~\ref{JLlemma} is linear and $(1\pm \e)$-preserves squared pairwise distances, 
	their convex combinations also get $(1\pm\e)$-preserved. 
        \end{proof}

	As mentioned previously, the preservation of the pointwise $k$-distance does not imply the preservation of the \Cech complex formed using the points in $P$. 
        Nevertheless, the following theorem shows that this can always be done in dimension $O(\log n/\e^2)$.

	Let $ \wbary_{P,k} $ be the set of iso-barycenters of every $ k
        $-subset of $ P $, weighted as  
in
        Section~\ref{subsec:dtmbackground}. 
	Recall from Section~\ref{subsec:ph} that the weighted \Cech complex 
	$\check{C}_\alpha(\wbary_{P,k})$ is identical to the \Cech complex 
	$\check{C}_\alpha(P)$ for the $k$-distance. We now want to apply 
        Lemma~\ref{lotzlemmawtd}, for which the following theorem will be 
        needed.

	\begin{theorem}[$k$-distance]
		\label{persth1}
		Let $ \hat{\sigma} \subseteq \wbary_{P,k} $ be a
                simplex in the weighted \Cech complex $
                \check{C}_{\alpha}(\wbary_{P,k}) $. 
		  Then, given $d \leq D$ such that there exists a $\e$-distortion map $
                  \mathnormal{f}:\mathbb{R}^{D} \to \mathbb{R}^{d}$ with respect to $P$, 		
                it holds that 
		 \[(1-\e)\rad^2(\hat{\sigma}) \le \rad^2(\widehat{f(\sigma)}) \le (1+\e)\rad^2(\hat{\sigma}).\] 
              \end{theorem}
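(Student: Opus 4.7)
The plan is to express $\rad^2(\hat\sigma)$ as a non-negative combination of pairwise squared Euclidean distances among points of $P$ itself, so that the $\e$-distortion hypothesis transfers directly, and then bridge back to $\rad^2(\widehat{f(\sigma)})$ via a short weak-duality observation. I assume $f$ is linear, as is the case for the subgaussian $\e$-distortion maps of Definition~\ref{def:eps-distort-map}, so that $f$ commutes with iso-barycenter formation.

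First, Lemma~\ref{l:meb-rad-conv-comb} applied to $\hat\sigma$ produces a probability vector $(\lambda^*_i)_{i\in I}$ supported on the ``active'' weighted barycenters $\hat b_i\in\hat\sigma$ with $\rad^2(\hat\sigma)=\tfrac12\sum_{i,j\in I}\lambda^*_i\lambda^*_j D(\hat b_i,\hat b_j)$. Each pairwise power distance $D(\hat b_i,\hat b_j)$ then expands, via Lemma~\ref{l:pow-dist-pairwise-dist}, into $k^{-2}\sum_{l,s=1}^k\|p_{i,l}-p_{j,s}\|^2$ with $p_{i,l},p_{j,s}\in P$, displaying $\rad^2(\hat\sigma)$ as an explicit non-negative combination of squared pairwise distances in $P$. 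Linearity of $f$ guarantees that $f(b_i)=\tfrac{1}{k}\sum_l f(p_{i,l})$ is again the iso-barycenter of $k$ points in $f(P)$, so the recomputed weights of $\widehat{f(\sigma)}$ fit the same template, and Lemma~\ref{l:pow-dist-pairwise-dist} applied in $\R^d$ yields the analogous expansion of $D(\widehat{f(b_i)},\widehat{f(b_j)})$ in terms of $\|f(p_{i,l})-f(p_{j,s})\|^2$. Applying the $\e$-distortion bound termwise (so that squared distances between points of $P$ are preserved up to $(1\pm\e)$, in the paper's convention) then gives
\[
(1-\e)\rad^2(\hat\sigma)\;\le\;\tfrac{1}{2}\sum_{i,j\in I}\lambda^*_i\lambda^*_j D(\widehat{f(b_i)},\widehat{f(b_j)})\;\le\;(1+\e)\rad^2(\hat\sigma).
\]

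To bridge the middle quantity and $\rad^2(\widehat{f(\sigma)})$, I use the following weak-duality observation: for any probability vector $\mu$ on a weighted set $\hat X=\{\hat q_1,\ldots,\hat q_m\}$, the centroid $c_\mu=\sum_i\mu_i q_i$ minimises $c\mapsto\sum_i\mu_i D(c,\hat q_i)$, and by identity~\eqref{eqn:basic-var-id} the minimum value equals $\tfrac12\sum_{i,j}\mu_i\mu_j D(\hat q_i,\hat q_j)$; since $\sum_i\mu_i D(c,\hat q_i)\le\max_i D(c,\hat q_i)$ for every $c$, minimising both sides over $c$ gives $\tfrac12\sum_{i,j}\mu_i\mu_j D(\hat q_i,\hat q_j)\le\rad^2(\hat X)$ for every $\mu$. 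Taking $\mu=\lambda^*$ in $\R^d$ thus turns the left inequality of the display above into the lower bound $(1-\e)\rad^2(\hat\sigma)\le\rad^2(\widehat{f(\sigma)})$. For the matching upper bound I rerun the same three-step recipe starting in $\R^d$: apply Lemma~\ref{l:meb-rad-conv-comb} to $\widehat{f(\sigma)}$ to obtain an optimal probability vector $\mu^*$, expand through Lemma~\ref{l:pow-dist-pairwise-dist} in $\R^d$, push the $\e$-distortion the other way to return to squared distances in $P$, refold into $\tfrac12\sum\mu^*_i\mu^*_j D(\hat b_i,\hat b_j)$, and close with weak duality in $\R^D$ to obtain $\rad^2(\widehat{f(\sigma)})\le(1+\e)\rad^2(\hat\sigma)$. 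The main conceptual obstacle is precisely the weak-duality step, which strengthens Lemma~\ref{l:meb-rad-conv-comb} (a statement only about the optimal convex combination) to an upper bound against \emph{every} probability vector, in both $\R^D$ and $\R^d$; it is a short consequence of~\eqref{eqn:basic-var-id} but is the glue that lets the optima chosen in the two spaces be decoupled.
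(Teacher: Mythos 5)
Your proof is correct and follows the same skeleton as the paper's: reduce $\rad^2(\hat\sigma)$ to the quadratic form $\tfrac12\sum_{i,j}\lambda^*_i\lambda^*_j D(\hat b_i,\hat b_j)$ via Lemma~\ref{l:meb-rad-conv-comb}, expand each $D(\hat b_i,\hat b_j)$ into pairwise squared distances of points of $P$ via Lemma~\ref{l:pow-dist-pairwise-dist}, transfer the $(1\pm\e)$ bound termwise, and then play the optimal weight vectors of $\hat\sigma$ and of $\widehat{f(\sigma)}$ against each other. Your explicit linearity assumption is indeed needed (otherwise $f(b_i)$ need not be the barycenter of the $f(p_{i,l})$ and Lemma~\ref{l:pow-dist-pairwise-dist} has no analogue in the image); the paper leaves it implicit but restricts to linear maps throughout. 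The parenthetical about squared distances being $(1\pm\e)$-preserved is a convention slip you share with the paper (the definition of an $\e$-distortion map literally gives $(1\mp\e)^2$ on squared distances) and is harmless.

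The substantive difference is the variational step, and there your version is the correct one. The paper justifies its two cross-comparisons by declaring that ``the squared radius is a minimizing function'', i.e.\ that the optimal weights \emph{minimize} $\mu\mapsto\tfrac12\sum_{i,j}\mu_i\mu_j D(\hat b_i,\hat b_j)$ over the simplex, and accordingly writes $\rad^2(\hat\sigma)\le\tfrac12\sum_{i,j}\nu_i\nu_j D(\hat b_i,\hat b_j)$ and $\rad^2(\widehat{f(\sigma)})\le\tfrac12\sum_{i,j}\lambda_i\lambda_j D(\widehat{f(b_i)},\widehat{f(b_j)})$. Both inequalities point the wrong way: as your weak-duality observation shows, $\tfrac12\sum_{i,j}\mu_i\mu_j D(\hat q_i,\hat q_j)\le\rad^2(\hat X)$ for \emph{every} probability vector $\mu$, with equality at the optimum, so the quadratic form is the dual \emph{maximization} of the min-enclosing-ball problem (already for two unweighted points the form $\mu_1\mu_2\|p_1-p_2\|^2$ is maximized, not minimized, at $\mu=(1/2,1/2)$). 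Your chaining --- lower-bounding $\rad^2(\widehat{f(\sigma)})$ by the $\lambda^*$-form in $\R^d$, and upper-bounding the $\mu^*$-form in $\R^D$ by $\rad^2(\hat\sigma)$ --- uses the inequality in the correct direction in both spaces, and is exactly the repair the paper's written argument needs; the theorem itself is unaffected.
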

	
	\begin{proof}[Proof of Theorem \ref{persth1}]
	        Let $ \hat{\sigma} =
                \{\hat{b}_1,\hat{b}_2,...,\hat{b}_m\} $, where
                $\hat{b}_i$ is the weighted point defined in
                Section~\ref{subsec:ph}, i.e.  $\hat{b}_i=(b_i, w(b_i))$ with $b_i \in \bary_{P,k} $ and $w(b_i) = -\frac{1}{k}\sum_{l=1}^k \|b_i-p_{il}\|^2$, 
	        where $p_{i,1},\ldots,p_{i,k} \in P$ are such that $b_i = \frac{1}{k}\sum_{j=1}^k p_{i,j}$.
		Applying Lemma~\ref{l:meb-rad-conv-comb} to $\hat{\sigma}$, we have that 
		\begin{eqnarray}
			\rad^2(\hat{\sigma})   & = &\dfrac{1}{2}\sum_{i,j\in I} \lambda_i\lambda_j D(\hat{b}_i,\hat{b}_j). 
				               \label{eq:rad-sigma}
		\end{eqnarray}
                By Lemma~\ref{l:pow-dist-pairwise-dist}, the distance between $\hat{b}_i$ and $\hat{b}_j$ is $D(\hat{b}_i,\hat{b}_j) = 
                \frac{1}{k^2}\sum_{l,s=1}^k \|p_{i,l}-p_{j,s}\|^2$.
                As this last expression is a convex combination of squared pairwise distances of points in $P$, it is $(1\pm \e)$-preserved by any $\e$-distortion map
                with respect to $P$, 
                which implies that the convex combination $\rad^2(\hat{\sigma}) = \frac{1}{2}\sum_{i,j\in I} \lambda_i\lambda_j D(\hat{p}_i,\hat{p}_j)$ corresponding 
                to the squared radius of $\sigma$ in $\R^D$, will be $(1\pm \e)$-preserved.

		Let $f:\R^D\to \R^d$ be an $\e$-distortion map with respect to $P$, from $\R^D$ to $\R^d$, where $d$ will be chosen later. 
                By Lemma~\ref{l:meb-rad-conv-comb}, the centre of $\widehat{f(\sigma)}$ is a convex combination of the points $(f(b_i))_{i=1}^m$.
                Let the centre $c(\widehat{f(\sigma)})$ be given by $c(\widehat{f(\sigma)}) = \sum_{i\in I} \nu_i D(\widehat{f(b_i)})$.
                where for $i\in I$, $\nu_i\geq 0$, $\sum_i \nu_i =1$. Consider the convex combination of power distances 
                $\sum_{i,j\in I} \nu_i \nu_j D(\hat{b}_i,\hat{b}_j)$. 
                Since $f$ is an $\e$-distortion map with respect to $P$, by Lemmas~\ref{l:pow-dist-pairwise-dist} and~\ref{JLlemma} we get
                \begin{eqnarray}
                    \dfrac{1}{2}(1-\e)\sum_{i,j\in I} \nu_i\nu_j D(\hat{b}_i,\hat{b}_j) 
                                                    &\leq& \dfrac{1}{2}\sum_{i,j\in I} \nu_i\nu_j D(\widehat{f(b_i)},\widehat{f(b_j)}) \;\;=\;\; \rad^2(\widehat{f(\sigma)}).
                    \label{eqn:nu-lambda-jl}
                \end{eqnarray}
                On the other hand, since the squared radius is a minimizing function by definition, we get that 
                \begin{eqnarray}
                    \rad^2(\hat{\sigma}) &=& \dfrac{1}{2}\sum_{i,j\in I} \lambda_i\lambda_j D(\hat{b}_i,\hat{b}_j) 
                                             \;\;\leq\;\; \dfrac{1}{2}\sum_{i,j\in I} \nu_i\nu_j D(\hat{b}_i,\hat{b}_j) \label{eqn:lambda-nu-orig}\\
                                             &\leq& \dfrac{1}{(1-\e)}\rad^2(f(\sigma)), \mbox{ by~\eqref{eqn:nu-lambda-jl}}. \notag\\
                    \rad^2(\widehat{f(\sigma)}) &=& \dfrac{1}{2}\sum_{i,j\in I} \nu_i\nu_j D(\widehat{f(b_i)},\widehat{f(b_j)})  
                                                \;\;\leq\;\; \dfrac{1}{2}\sum_{i,j\in I} \lambda_i\lambda_j D(\widehat{f(b_i)},\widehat{f(b_j)}). \label{eqn:nu-lambda-im}
                \end{eqnarray}
                Combining the inequalities~\eqref{eqn:nu-lambda-jl}, ~\eqref{eqn:lambda-nu-orig},~\eqref{eqn:nu-lambda-im} gives
                \begin{eqnarray*}
                    (1-\e)\rad^2(\hat{\sigma}) &\leq& \rad^2(\widehat{f(\sigma)}) 
                                                      \;\;\leq\;\; \dfrac{1}{2}\sum_{i,j\in I} \lambda_i\lambda_j D(\widehat{f(b_i)},\widehat{f(b_j)})
                                                      \;\;\leq\;\; (1+\e)\rad^2(\hat{\sigma}).
                \end{eqnarray*}
                where the final inequality follows by Lemma~\ref{JLlemma}, since $f$ is an $\e$-distortion map with respect to $P$.
                Thus, we have that 
                \[ (1-\e)\rad^2(\hat{\sigma}) \;\;\leq\;\; \rad^2(\widehat{f(\sigma)}) \;\;\leq\;\; (1+\e)\rad^2(\hat{\sigma}) ,\]
                which completes the proof of the theorem.
              \end{proof}


\begin{theorem}[Approximate $ k $-distance]
\label{thm:approx-k-dist}
    Let $\hat{P}$ be the weighted points associated with $P$, introduced in
     Definition~\ref{def:approximation}
     (Equ.~\ref{eq:dppx-approx}). Let, in addition,  $
     \hat{\sigma} \subseteq \hat{P} $ be a simplex in the associated weighted \Cech
     complex $ \check{C}_{\alpha}(\hat{P}) $.   
Then an $\e$-distortion mapping with respect to $P$, $ \mathnormal{f}:\mathbb{R}^{D} \to
\mathbb{R}^{d} $ satisfies: $
(1-\e)\rad^2(\hat{\sigma}) \le \rad^2(\widehat{f(\sigma)}) \le
(1+\e)\rad^2(\hat{\sigma})$. 

\end{theorem}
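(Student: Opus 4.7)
The plan is to mirror the proof of Theorem~\ref{persth1} as closely as possible, the key conceptual shift being how the weights on $\hat{P}$ behave under $f$. In the $k$-distance case of Theorem~\ref{persth1}, the weights of the barycenters were already convex combinations of squared pairwise distances among points of $P$, so their $(1\pm \e)$-preservation followed immediately from $f$ being an $\e$-distortion map with respect to $P$. Here, by contrast, the weight $w(p_i)=-d_{P,k}^2(p_i)$ on each point of $\hat{P}$ is not such a convex combination, and its preservation requires the separate input of Theorem~\ref{jlkdist}, which tells us that $d_{P,k}^2$ is $(1\pm \e)$-preserved pointwise on $P$ after mapping.

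The first step is to apply Lemma~\ref{l:meb-rad-conv-comb} to $\hat{\sigma}=\{\hat{p}_1,\ldots,\hat{p}_m\}$, yielding nonnegative coefficients $(\lambda_i)_{i\in I}$ with $\sum_{i\in I}\lambda_i=1$ and
\[ \rad^2(\hat{\sigma}) = \dfrac{1}{2}\sum_{i,j\in I}\lambda_i\lambda_j D(\hat{p}_i,\hat{p}_j) = \dfrac{1}{2}\sum_{i,j\in I}\lambda_i\lambda_j\pth{\|p_i-p_j\|^2 + d_{P,k}^2(p_i) + d_{P,k}^2(p_j)}.\]
The second step is to observe that each summand on the right is a sum of three non-negative quantities: one squared Euclidean distance between points of $P$, and two squared pointwise $k$-distances at points of $P$. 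The first is $(1\pm \e)$-preserved by the $\e$-distortion hypothesis, and the other two by Theorem~\ref{jlkdist}. Writing $\widehat{f(p_i)}$ for the mapped point equipped with the recomputed weight $-d^2_{f(P),k}(f(p_i))$, this yields termwise
\[ (1-\e) D(\hat{p}_i,\hat{p}_j) \;\le\; D(\widehat{f(p_i)},\widehat{f(p_j)}) \;\le\; (1+\e) D(\hat{p}_i,\hat{p}_j).\]

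The final step is the minimum-of-convex-combinations argument from Theorem~\ref{persth1}. Let $(\nu_i)_{i\in I}$ be the barycentric coefficients realizing the center $c(\widehat{f(\sigma)})$, so that Lemma~\ref{l:meb-rad-conv-comb} applied in $\R^d$ gives $\rad^2(\widehat{f(\sigma)}) = \frac{1}{2}\sum_{i,j\in I}\nu_i\nu_j D(\widehat{f(p_i)},\widehat{f(p_j)})$. Since $\rad^2(\hat{\sigma})$ is defined as a minimum over all such coefficients in $\R^D$, comparing the $(\nu_i)$-convex combination against its pullback yields $(1-\e)\rad^2(\hat{\sigma}) \le \rad^2(\widehat{f(\sigma)})$, and plugging the original $(\lambda_i)$ into the expression for $\rad^2(\widehat{f(\sigma)})$ (which minimizes in $\R^d$) and using the upper side of the termwise bound gives $\rad^2(\widehat{f(\sigma)}) \le (1+\e)\rad^2(\hat{\sigma})$. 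The main obstacle is purely conceptual: one must notice that even though the weights on $\hat{P}$ are recomputed rather than transported by $f$, the pairwise power distances still decompose as sums of non-negative $(1\pm \e)$-preserved quantities, so convex combinations of them are preserved and the minimization argument proceeds unchanged.
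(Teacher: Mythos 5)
Your proposal is correct and follows essentially the same route as the paper: the paper's proof likewise applies Lemma~\ref{l:meb-rad-conv-comb} to write $\rad^2(\hat\sigma)$ as a convex combination of squared pairwise distances plus a $\lambda$-weighted sum of the squared $k$-distances $d_{P,k}^2(p_i)$, invokes the pointwise preservation of $d_{P,k}^2$ (Theorem~\ref{jlkdist}) for the recomputed weights, and then defers to the two-sided comparison of the $(\lambda_i)$ and $(\nu_i)$ convex combinations exactly as in Theorem~\ref{persth1}. Your explicit remark that each summand splits into non-negative $(1\pm\e)$-preserved pieces is a point the paper leaves implicit, but it is the same argument.
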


	\begin{proof}[Proof of Theorem~\ref{thm:approx-k-dist}]

        Recall that, in Section \ref{subsec:dtmbackground}, we defined the approximate $ k $-distance to be 	
	$\tilde{d}_{P,k}(x) := \min_{p \in P}\sqrt{ D(x,\hat{p})}$,  where $\hat{p} = (p,w(p))$ is a weighted point, having weight $ w(p)= - d_{P,k}^2(p)$. So,  
	the \Cech complex would be formed by the intersections of the balls around the weighted points in $ P $. 
		The proof follows on the lines of the proof of
                Theorem~\ref{persth1}. Let $\hat\sigma=\{\hat{p}_1,\hat{p}_2,...,\hat{p}_m\}$, where $\hat{p}_1,\ldots,\hat{p}_m$ are weighted points in $\hat{P}$, 
                and let $c(\hat{{\sigma}})$ be the center of $\hat{\sigma}$.
		Applying again Lemma~\ref{l:meb-rad-conv-comb}, we get 
		\begin{eqnarray*}
			\rad^2(\hat{\sigma}) =
                        \frac{1}{2}\sum_{i,j\in I}\lambda_i\lambda_j\|p_i-p_j\|^2
                        + \sum_{i\in I}\lambda_iw(p_i)  =  \sum_{i,j\in I; i<j}\lambda_i\lambda_j\|p_i- p_j\|^2 +\sum_{i\in I}\lambda_iw(p_i), 
		\end{eqnarray*}
		where $ w(p)= d_{P,k}^2(p) $. In the second equality, we used the fact that the summand corresponding to a fixed pair of distinct indices $i<j$ is being 
		counted twice 
                and that the contribution of the terms corresponding to indices $i=j$ is zero.
		An $\e$-distortion map with respect to $P$ preserves pairwise distances
                and the $ k $-distance in dimension $ O(\e^{-2}\log n)
                $. The result then follows as in the proof of 
		Theorem \ref{persth1}.
	\end{proof}
	
Applying Lemma~\ref{lotzlemmawtd} to the theorems~\ref{persth1} and ~\ref{thm:approx-k-dist},
we get the following corollary.
\begin{corollary}
\label{cor:pers-homol-interleave}
	The persistent homology for the \Cech filtrations of $ P $ and its image $ f(P) $ under any $\e$-distortion mapping with respect to $P$, using the 
        $(i)$ exact $k$-distance, as well as the $(ii)$ approximate $ k $-distance, 
	are preserved upto a multiplicative factor of $(1-\e)^{-1/2}$. 
\end{corollary}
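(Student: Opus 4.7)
The plan is to feed Theorems~\ref{persth1} and~\ref{thm:approx-k-dist} directly into Lemma~\ref{lotzlemmawtd}, which is the one piece of genuine work left to assemble the corollary. Recall from Section~\ref{subsec:dtmbackground} that the $k$-distance \Cech filtration of $P$ coincides with the weighted \Cech filtration of the iso-barycenter set $\wbary_{P,k}$, and that the approximate $k$-distance filtration of $P$ coincides with the weighted \Cech filtration of $\hat{P}$ (with weights $w(p)=-d_{P,k}^2(p)$). So it suffices to show, in each of the two cases, that the hypothesis of Lemma~\ref{lotzlemmawtd} holds with $\hat{X}=\wbary_{P,k}$ and $\hat{X}=\hat{P}$ respectively.

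For case $(i)$, I take any subset $\hat{S}\subseteq \wbary_{P,k}$. Being a subset of barycenter-weighted points, $\hat{S}$ is automatically a simplex of the weighted \Cech complex $\check{C}_{\alpha}(\wbary_{P,k})$ once $\alpha$ is large enough, so Theorem~\ref{persth1} applies and yields
\[ (1-\e)\rad^2(\hat{S}) \le \rad^2(\widehat{f(S)}) \le (1+\e)\rad^2(\hat{S}). \]
This is exactly the hypothesis of Lemma~\ref{lotzlemmawtd}, whose conclusion then tells us that the weighted \Cech filtrations of $\wbary_{P,k}$ and $\widehat{f(\wbary_{P,k})}$ are multiplicatively $(1-\e)^{-1/2}$-interleaved. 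Via the identification of filtrations above, this is precisely the interleaving of the $k$-distance \Cech filtrations of $P$ and $f(P)$. The standard stability fact that a multiplicative $c$-interleaving of filtrations induces a multiplicative $c$-interleaving of persistence diagrams (see e.g.~\cite{DBLP:books/daglib/0039900}) then gives the claimed $(1-\e)^{-1/2}$ approximation at the level of persistent homology.

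Case $(ii)$ proceeds identically with $\hat{X}=\hat{P}$ replacing $\wbary_{P,k}$ and Theorem~\ref{thm:approx-k-dist} replacing Theorem~\ref{persth1}: every subset $\hat{S}\subseteq\hat{P}$ is a simplex of $\check{C}_\alpha(\hat{P})$ for $\alpha$ large, Theorem~\ref{thm:approx-k-dist} supplies the two-sided squared radius bound, Lemma~\ref{lotzlemmawtd} delivers the $(1-\e)^{-1/2}$-interleaving of weighted filtrations, and the same stability transfer finishes the proof.

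There is essentially no obstacle here; the only point worth a sanity check is the subtlety that motivated Lemma~\ref{lotzlemmawtd} in the first place, namely that on the image side the weights attached to $f(b)$ or $f(p)$ must be recomputed from $f(P)$ in $\R^d$ rather than simply inherited from $\R^D$. This is not an issue for applying the lemma because Lemma~\ref{lotzlemmawtd} is already stated with recomputed weights, and it is not an issue for applying Theorems~\ref{persth1} and~\ref{thm:approx-k-dist} because their proofs express $\rad^2(\widehat{f(\sigma)})$ as a convex combination of pairwise \emph{Euclidean} squared distances $\|f(p_{i,l})-f(p_{j,s})\|^2$ in $\R^d$, which are the quantities controlled by the $\e$-distortion hypothesis on $f$.
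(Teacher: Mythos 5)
Your proposal is correct and follows essentially the same route as the paper, which simply states that the corollary follows by applying Lemma~\ref{lotzlemmawtd} to Theorems~\ref{persth1} and~\ref{thm:approx-k-dist}; your write-up merely spells out the identification of the $k$-distance filtration with the weighted filtration of $\wbary_{P,k}$ (resp.\ $\hat{P}$), the passage from simplices to arbitrary subsets, and the recomputed-weights subtlety, all of which the paper leaves implicit.
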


However, note that the approximation in Corollary~\ref{cor:pers-homol-interleave} $(ii)$ is with respect to the \emph{approximate} $k$-distance, 
which is itself an approximation of the $k$-distance by a distortion factor $3\sqrt{2}$, (i.e. bounded away from $1$ -- see~\eqref{eqn:approx-k-dist-approx}).

%

        \section{Extensions}
        \label{sec:appln}

        As Theorem~\ref{persth1} applies to arbitrary $\e$-distortion maps, it naturally follows that many of the extensions and variants of the 
        JL Lemma, e.g. discussed in Section~\ref{subsec:rand-proj}, have their corresponding versions for the $k$-distance as well. 
        In this section we elucidate some of the corresponding extensions of Theorem~\ref{persth1}. 
        These can yield better bounds for the dimension of the embedding, stronger dimensionality reduction results, or easier to implement reductions 
        in their respective settings.

        The first result in this section, is for point sets contained in a region of bounded Gaussian width.



        \begin{theorem} 
        \label{thm:bdd-gauss-width} 
        Let $P \subset \R^D$ be a finite set of points, 
        and define $S := \{(x-y)/\|x-y\|\;:\; x,y\in P\}$. Let $w(S)$ denote the Gaussian width of $S$. Then, 
        given any $\e,\delta \in (0,1)$, 
        any subgaussian $\e$-distortion map from $\R^D$ to $\R^d$ 
        preserves the persistent homology of the $k$-distance based 
        \Cech filtration associated to $P$, up to a multiplicative factor of $(1-\e)^{-1/2}$,
        given that 
        \[d \geq \frac{\pth{w(S)+\sqrt{2\log (2/\delta)}}^2}{\e^2}+1.\]
        \end{theorem}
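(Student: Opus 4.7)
The plan is to obtain this as a direct composition of three ingredients already available: Theorem~\ref{thm:lotz-gordon-gw} (subgaussian projection for bounded Gaussian width), Theorem~\ref{persth1} (radius preservation for simplices of the weighted \Cech complex), and Lemma~\ref{lotzlemmawtd} (radius preservation $\Rightarrow$ interleaving of filtrations). The point is that both Theorem~\ref{persth1} and Lemma~\ref{lotzlemmawtd} are stated for an \emph{arbitrary} $\e$-distortion map with respect to $P$; they make no use of the JL dimension bound $d = O(\e^{-2}\log n)$ beyond invoking the distortion property itself. So any mechanism that certifies $\e$-distortion with respect to $P$ in the target dimension can be substituted for the JL lemma, and the conclusion about persistence will follow by an identical argument.

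First, I would apply Theorem~\ref{thm:lotz-gordon-gw} to the set $S = \{(x-y)/\|x-y\| : x,y \in P\}$ of normalized pairwise difference vectors. The hypothesis on $d$ is exactly the one required there, so with probability at least $1-\delta$ the subgaussian map $f \colon \R^D \to \R^d$ is an $\e$-distortion map with respect to $P$. Conditional on this event, Theorem~\ref{persth1} applies to every simplex $\hat{\sigma} \subseteq \wbary_{P,k}$ of the weighted \Cech complex $\check{C}_{\alpha}(\wbary_{P,k})$, yielding
\[
(1-\e)\rad^2(\hat{\sigma}) \;\leq\; \rad^2(\widehat{f(\sigma)}) \;\leq\; (1+\e)\rad^2(\hat{\sigma}).
\]
Feeding this into Lemma~\ref{lotzlemmawtd} gives that the weighted \Cech filtrations of $\wbary_{P,k}$ and $\widehat{f(\bary_{P,k})}$ are multiplicatively $(1-\e)^{-1/2}$ interleaved, and interleaved filtrations induce interleaved persistence diagrams by the same stability principle cited in Corollary~\ref{cor:pers-homol-interleave}.

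The only point that actually needs to be checked, rather than merely cited, is that the proof of Theorem~\ref{persth1} never silently depends on the JL dimension; the argument there uses exactly two facts about $f$, namely that squared pairwise distances between points of $P$ are $(1\pm\e)$-preserved (via Lemma~\ref{l:pow-dist-pairwise-dist} expressing $D(\hat{b}_i,\hat{b}_j)$ as a convex combination of such squared distances) and that the weighted radius is a minimum over the convex-combination representation (Lemma~\ref{l:meb-rad-conv-comb}). Both are purely consequences of the distortion hypothesis on $P$, so the transport is routine.

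The only mild obstacle worth flagging is bookkeeping of the weights: the weights of the barycenters have to be recomputed in $\R^d$ from the images $f(P)$, which is precisely why Lemma~\ref{lotzlemmawtd} (rather than the unweighted version of Lotz) is required; this was however already handled in Section~\ref{sec: results}, so no additional argument is needed. Putting the three ingredients together therefore yields the claimed $(1-\e)^{-1/2}$ interleaving with probability at least $1-\delta$, completing the proof.
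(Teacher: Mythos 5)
Your proposal is correct and follows exactly the paper's own proof: invoke Theorem~\ref{thm:lotz-gordon-gw} to certify that the subgaussian map is an $\e$-distortion map with respect to $P$ at the stated dimension, then apply Theorem~\ref{persth1} to every simplex and conclude via Lemma~\ref{lotzlemmawtd}. Your additional check that Theorem~\ref{persth1} uses only the distortion property and not the JL dimension bound is a sensible (and accurate) elaboration, but does not change the route.
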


        Note that the above theorem is not stated for an arbitrary $\e$-distortion map. 
        Also, since the Gaussian width of an $n$-point set is at most $O(\log n)$ (using e.g. the Gaussian concentration inequality, 
        see e.g.~\cite[Section 2.5]{DBLP:books/daglib/0035704}), Theorem~\ref{thm:bdd-gauss-width}
        strictly generalizes Corollary~\ref{cor:pers-homol-interleave}. \\ 

        \begin{proof}[Proof of Theorem~\ref{thm:bdd-gauss-width}]
        By Theorem~\ref{thm:lotz-gordon-gw}, the scaled random Gaussian matrix $f:x\mapsto  \pth{\sqrt{D/d}}Gx$ is an $\e$-distortion map
        with respect to $P$, having 
        dimension $d\geq \frac{\pth{w(S)+\sqrt{2\log (2/\delta)}}^2}{\e^2}+1$. 
        Now applying Theorem~\ref{persth1} to the point set $P$ with the mapping $f$, immediately gives us that  
        for any simplex $\hat{\sigma} \in \check{C}_{\alpha}(\hat{\bary}_{P,k})$, where $\check{C}_{\alpha}(\hat{\bary}_{P,k})$ 
        is the weighted \Cech complex with parameter $\alpha$, the squared radius $\rad^2(\hat{\sigma})$ is preserved up to a 
        multiplicative factor of $(1\pm \e)$. By Lemma~\ref{lotzlemmawtd}, this implies that the persistent homology for the 
        \Cech filtration is $(1-\e)^{-1/2}$-multiplicatively interleaved.
        \end{proof}

        For point sets lying on a low-dimensional submanifold of a high-dimensional Euclidean space, one can obtain an embedding having smaller dimension, using the 
        bounds of Baraniuk and Wakin~\cite{DBLP:journals/focm/BaraniukW09} or Clarkson~\cite{DBLP:conf/compgeom/Clarkson08}, 
        which will depend only on the parameters of the submanifold.   

        \begin{theorem}
        \label{thm:manif-bdd}
        There exists an absolute constant $c>0$ such that, given a finite point set $P$ lying on a connected, compact, orientable, 
        differentiable $\mu$-dimensional submanifold $M \subset \R^D$, and $\e,\delta \in (0,1)$, an $\e$-distortion map 
        $f:\R^D\to \R^d$ preserves the persistent homology of the \Cech filtration computed on $P$, using the $k$-distance, 
        provided 
        \[ d \geq c\pth{\frac{\mu\log(1/\e)+\log(1/\delta)}{\e^2} + \frac{C(M)}{\e^2}},\]
        where $C(M)$ depends only on $M$. 
        
        \end{theorem}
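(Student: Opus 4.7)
The plan is to mirror almost exactly the argument used to establish Theorem~\ref{thm:bdd-gauss-width}, but replacing the Gaussian-width JL variant with Clarkson's manifold version of the Johnson--Lindenstrauss lemma. Since Theorem~\ref{persth1} and Lemma~\ref{lotzlemmawtd} are formulated for arbitrary $\e$-distortion maps with respect to $P$, all the work is in invoking an appropriate dimension-reduction result that produces such a map under the submanifold hypothesis, and then chaining the statements together.

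First, I would apply Theorem~\ref{thm:clark-manif-bdd} to the submanifold $M \subset \R^D$: for the specified dimension bound
\[ d \;\geq\; c\pth{\frac{\mu\log(1/\e)+\log(1/\delta)}{\e^2} + \frac{C(M)}{\e^2}}, \]
a random scaled subgaussian matrix $f\colon \R^D \to \R^d$ given by $v\mapsto \sqrt{D/d}\,Gv$ is, with probability at least $1-\delta$, an $\e$-distortion map with respect to $P$ (noting that $P\subset M$, so preservation of pairwise distances on $M$ in particular yields preservation of pairwise distances on $P$).

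Second, conditioning on the event that $f$ is such an $\e$-distortion map, I would invoke Theorem~\ref{persth1} for every simplex $\hat{\sigma} \subseteq \wbary_{P,k}$ of the weighted \Cech complex $\check{C}_\alpha(\wbary_{P,k})$ associated with the $k$-distance. This yields
\[ (1-\e)\rad^2(\hat{\sigma}) \;\le\; \rad^2(\widehat{f(\sigma)}) \;\le\; (1+\e)\rad^2(\hat{\sigma}), \]
for every such $\hat{\sigma}$, where the weights on the right are recomputed in the image space.

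Finally, Lemma~\ref{lotzlemmawtd} converts this uniform $(1\pm\e)$ preservation of squared radii of weighted simplices into a multiplicative $(1-\e)^{-1/2}$-interleaving of the weighted \Cech filtrations of $\wbary_{P,k}$ and $f(\wbary_{P,k})$. As noted in Section~\ref{subsec:dtmbackground}, the \Cech filtration for the $k$-distance on $P$ coincides with this weighted \Cech filtration, so the corresponding persistence diagrams are multiplicatively $(1-\e)^{-1/2}$-interleaved, which is the desired conclusion. There is no real obstacle here, since every ingredient has already been prepared: the only subtle point to check is that Clarkson's theorem yields a distortion bound on all pairs of points in $P$ (and not merely on geodesic distances), but this is immediate from the cited statement of Theorem~\ref{thm:clark-manif-bdd}, which asserts precisely the $\e$-distortion map property with respect to $P$.
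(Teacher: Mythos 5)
Your proposal is correct and follows essentially the same route as the paper's own (very terse) proof: invoke Clarkson's theorem (Theorem~\ref{thm:clark-manif-bdd}) to obtain an $\e$-distortion map with respect to $P$ at the stated dimension, then feed it into Theorem~\ref{persth1} and conclude via Lemma~\ref{lotzlemmawtd}. Your version simply spells out the chaining more explicitly than the paper does.
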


        \begin{proof}[Proof of Theorem]
         The proof follows directly, by applying the map in Clarkson's bound (Theorem~\ref{thm:clark-manif-bdd}) as the $\e$-distortion map in Theorem~\ref{persth1}. 
        \end{proof}
	
        Next, we state the terminal dimensionality reduction version of Theorem~\ref{persth1}. This is a useful result when 
        we wish to preserve the distance (or $k$-distance) from any point in the ambient space, to the original point set.
        \begin{theorem}
        \label{thm:term-kdist-dim-red}
        Let $P \in \R^D$ be a set of $n$ points. Then, given any $\e\in (0,1]$, there exists a map $f:\R^D\to \R^d$, where 
        $d = O\pth{\frac{\log n}{\e^2}}$, such that the persistent homology of the $k$-distance based \Cech filtration associated 
        to $P$ is preserved up to a multiplicative factor of $(1-\e)^{-1/2}$, and the $k$-distance of any point in $\R^D$ to 
        $P$, is preserved up to a $(1\pm O(\e))$ factor.
        \end{theorem}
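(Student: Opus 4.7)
My plan is to take $f$ to be the Narayanan–Nelson terminal map of Theorem~\ref{thm:nel-nar-term-dim-red}, whose output dimension is $d+1 = O(\e^{-2}\log n)$, and then to establish the two conclusions of the theorem separately by exploiting the two distance-preservation properties this $f$ enjoys: an $\e$-distortion guarantee on pairs inside $P$, and an $O(\e)$-distortion guarantee on pairs $(x,p)$ with $x\in \R^D$ arbitrary and $p\in P$.

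For the \Cech persistent homology claim, I would argue that Theorem~\ref{persth1} applies to $f$ verbatim, even though $f$ is non-linear. The point is that the proof of Theorem~\ref{persth1} accesses $f$ only through the squared pairwise distances among images of points of $P$: by Lemma~\ref{l:meb-rad-conv-comb} combined with Lemma~\ref{l:pow-dist-pairwise-dist}, the squared radius of any simplex in $\check{C}_\alpha(\wbary_{P,k})$ is a convex combination of pairwise squared distances of points in $P$, and the same is true of its image. Since $f$ restricted to $P$ is an $\e$-distortion map in the sense of Definition~\ref{def:eps-distort-map}, the proof goes through unchanged and gives $(1-\e)\rad^2(\hat\sigma) \le \rad^2(\widehat{f(\sigma)}) \le (1+\e)\rad^2(\hat\sigma)$ for every simplex $\hat\sigma$. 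Lemma~\ref{lotzlemmawtd} then yields the $(1-\e)^{-1/2}$-interleaving of the two weighted \Cech filtrations, exactly as in Corollary~\ref{cor:pers-homol-interleave}.

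For the second claim, fix an arbitrary $x\in\R^D$. The terminal property of $f$ gives $\|f(x)-f(p)\|^2 \in (1\pm O(\e))\|x-p\|^2$ for every $p\in P$, and averaging over any fixed $k$-subset $S_k\subseteq P$ preserves this $(1\pm O(\e))$ approximation for $\frac{1}{k}\sum_{p\in S_k}\|\cdot\|^2$. The approximation passes through the minimum over $k$-subsets by the standard sandwich: if $S_k^\ast$ and $S_k^{\ast\ast}$ are the minimizing subsets in source and target respectively, then $d_{f(P),k}^2(f(x)) \le \frac{1}{k}\sum_{p\in S_k^\ast}\|f(x)-f(p)\|^2 \le (1+O(\e))\, d_{P,k}^2(x)$, and symmetrically in the other direction using $S_k^{\ast\ast}$. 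Taking square roots yields $d_{f(P),k}(f(x)) \in (1\pm O(\e))\, d_{P,k}(x)$, as required.

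I do not expect a genuine obstacle here, since the heavy lifting is done by the quoted Theorems~\ref{thm:nel-nar-term-dim-red} and~\ref{persth1}. The only step that calls for care is the observation that the proof of Theorem~\ref{persth1} never invokes linearity of the embedding — it only manipulates pairwise squared distances of points of $P$ — and so continues to apply when $f$ is the non-linear Narayanan–Nelson map. Everything else is bookkeeping of the two distortion constants furnished by that map.
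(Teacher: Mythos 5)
Your overall architecture (use the Narayanan--Nelson map for the terminal/pointwise claim, use Theorem~\ref{persth1} for the persistent homology claim) matches the paper's, and your sandwich argument for the pointwise $k$-distance is fine. But the step you single out as ``the only step that calls for care'' is exactly where the gap is: the proof of Theorem~\ref{persth1} \emph{does} use linearity of $f$, and the assertion that it ``only manipulates pairwise squared distances of points of $P$'' is not accurate. In inequality~\eqref{eqn:nu-lambda-jl} that proof rewrites $D(\widehat{f(b_i)},\widehat{f(b_j)})$ via Lemma~\ref{l:pow-dist-pairwise-dist} as $\frac{1}{k^2}\sum_{l,s}\|f(p_{i,l})-f(p_{j,s})\|^2$; this is legitimate only when $f(b_i)$ is the iso-barycenter of $f(p_{i,1}),\dots,f(p_{i,k})$ and carries the correspondingly recomputed weight --- which is precisely what linearity buys. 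For a non-linear terminal map one has $f(b_i)\neq \frac{1}{k}\sum_l f(p_{i,l})$ in general, so $\widehat{f(\sigma)}$ is not even a simplex of the $k$-distance \Cech complex of $f(P)$ (whose vertices are the barycenters of $k$-subsets of $f(P)$, not the images of the barycenters of $k$-subsets of $P$), and the ``recomputed weight'' $w(f(b_i))$ is ill-defined. The paper explicitly notes, right after Theorem~\ref{thm:nel-nar-term-dim-red}, that any terminal map must be non-linear, so this cannot be waved away.

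The paper's proof sidesteps the issue by invoking the second assertion of Theorem~\ref{thm:nel-nar-term-dim-red}: the Narayanan--Nelson map is an outer extension of a \emph{linear} subgaussian $\e$-distortion map $\Pi$ (its projection onto the first block of coordinates), and Theorem~\ref{persth1} is applied to $\Pi$, while the full non-linear $f$ is used only for the terminal $k$-distance statement. Alternatively, your route could be repaired by replacing each $f(b_i)$ with the true barycenter of the image $k$-subset $\{f(p_{i,1}),\dots,f(p_{i,k})\}$ and rerunning the proof of Theorem~\ref{persth1} on those weighted points, since Lemma~\ref{l:pow-dist-pairwise-dist} then does express the relevant power distances as convex combinations of the quantities $\|f(p)-f(q)\|^2$, $p,q\in P$, which the terminal map $(1\pm\e)$-preserves. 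As written, however, ``Theorem~\ref{persth1} applies to $f$ verbatim'' is false and must be replaced by one of these two fixes.
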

	
        \begin{proof}
        The second part of the theorem follows immediately by applying Theorem~\ref{thm:nel-nar-term-dim-red}, with the point set $P$ as the set of terminals.
        By Theorem~\ref{thm:nel-nar-term-dim-red} $(ii)$, the dimensionality reduction map of~\cite{DBLP:conf/stoc/NarayananN19} 
        is an outer extension of a subgaussian $\e$-distortion map $\Pi:\R^D\to \R^{d-1}$. Now applying Theorem~\ref{persth1} to $\Pi$ gives 
        the first part of the theorem.
        \end{proof}

	\section{Conclusion and Future Work}
	\label{sec:future-work}
\vspace{2mm}

	\noindent{\bf $k$-Distance Vietoris-Rips and Delaunay filtrations.}  
        Since the Vietoris-Rips filtration~\cite[Chapter 4]{DBLP:books/daglib/0041250}
        depends only on pairwise distances, it follows from Theorem~\ref{jlkdist} 
        that this filtration with $k$-distances, is preserved upto a multiplicative factor of $(1-\e)^{-1/2}$, under a 
        Johnson-Lindenstrauss mapping.
        Furthermore, the $k$-distance Delaunay and the \Cech filtrations~\cite[Chapter 4]{DBLP:books/daglib/0041250} 
        have the same persistent homology. Corollary~\ref{cor:pers-homol-interleave} $(i)$ therefore implies that the $k$-distance Delaunay 
        filtration of a given finite point set $P$ is also $(1-\e)^{-1/2}$-preserved under an $\e$-distortion map with respect to $P$. Thus, 
        Corollary~\ref{cor:pers-homol-interleave} $(ii)$ 
        apply also to the approximate $k$-distance Vietoris-Rips and $k$-distance Delaunay filtrations.
\vspace{2mm}

	\noindent {\bf Kernels.}  
Other distance functions defined using kernels 
have proved successful in overcoming issues due to 
	outliers. Using a result analogous to Theorem~\ref{jlkdist}, we
        can show that random projections preserve the persistent homology for kernels 
	up to a $C(1-\e)^{-1/2}$ factor where $C$ is a constant. We
        don't know if we can make $C=1$ as for the $k$-distance.
\vspace{2mm}
	
\section*{Acknowledgement} The authors would like to thank the referees for their comments and observations, which helped add more consequences to the 
main results and also improved the quality of the paper. 

\section*{Statement on Conflict of Interest:} On behalf of all authors, the corresponding author states that there is no conflict of interest.

\bibliographystyle{plainurl}
	\bibliography{bibliography} 

\section*{Appendix}
The following elementary lemma gives identity~\eqref{eqn:basic-var-id}.

          \begin{lemma}   \label{l:basic-var-id}
          Let $b = \sum_{i=1}^k \lambda_i p_i$ be a convex combination of points $p_1,\dots,p_k$. Then for any point $x\in \R^D$,
          \begin{eqnarray}
             \sum_{i=1}^k \lambda_i \|x-p_i\|^2  &=& \|x-b\|^2 + \sum_{i=1}^k \lambda_i \|b-p_i\|^2. 
          \end{eqnarray}
          \end{lemma}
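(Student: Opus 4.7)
The plan is to give the short probabilistic proof alluded to in the paper, by recognizing the claimed identity as the familiar variance decomposition (or ``parallel axis theorem'') for a discrete vector-valued random variable.

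First, introduce a random point $Y$ taking values in $\{p_1,\dots,p_k\}$ with $\Prob{Y = p_i} = \lambda_i$; since $\lambda_i \geq 0$ and $\sum_i \lambda_i = 1$, this is a valid probability distribution. By construction, $\Ex{Y} = \sum_{i=1}^k \lambda_i p_i = b$, and for the two expressions appearing in the identity we have the expectation formulas $\Ex{\|x-Y\|^2} = \sum_{i=1}^k \lambda_i \|x-p_i\|^2$ and $\Ex{\|b - Y\|^2} = \sum_{i=1}^k \lambda_i \|b - p_i\|^2$. So the lemma is equivalent to the single equation $\Ex{\|x-Y\|^2} = \|x-b\|^2 + \Ex{\|b-Y\|^2}$, i.e., to the decomposition of the second moment of $Y$ about $x$ into a bias term and a variance term.

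Next, I would write out the elementary pointwise decomposition
\[
\|x - Y\|^2 = \|(x - b) - (Y - b)\|^2 = \|x-b\|^2 - 2\langle x - b,\, Y - b\rangle + \|Y - b\|^2,
\]
and then take expectations on both sides. The deterministic term $\|x-b\|^2$ passes through unchanged, while the cross term vanishes because $\Ex{Y - b} = \Ex{Y} - b = 0$ and $x - b$ is deterministic. What remains is exactly $\Ex{\|x-Y\|^2} = \|x-b\|^2 + \Ex{\|Y - b\|^2}$, which upon unpacking the expectations over the discrete distribution of $Y$ is the claimed identity.

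There is no real obstacle here; the only thing worth flagging is that the argument uses $\sum_i \lambda_i = 1$ twice, once to interpret $\{\lambda_i\}$ as a probability distribution and once implicitly when pulling $\|x-b\|^2$ out of the weighted sum, and that nonnegativity of $\lambda_i$ is used only for the probabilistic reading (the algebraic identity itself holds for any affine combination).
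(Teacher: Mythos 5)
Your proof is correct and takes essentially the same route as the paper: the paper also introduces the random variable $Y$ with $\Prob{Y=p_i}=\lambda_i$ and invokes the variance identity $\Ex{\|X\|^2}=\mathrm{Var}(X)+\|\Ex{X}\|^2$ for $X=x-Y$, which is exactly the decomposition you derive by expanding the square and killing the cross term. The only difference is that you prove the variance identity inline rather than citing it, which makes your write-up marginally more self-contained.
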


        \begin{proof}
           Recall the following fundamental relation between the variance and expectation of a random variable.
           Let $X\in \R^D$ be a random variable bounded in $\ell_2$. Then, by one characterization of the variance,
           \begin{eqnarray}
               \Ex{\|X\|^2} &=& \mathrm{Var}(X) + \|\Ex{X}\|^2 . \label{eqn:var-exsq-id}
           \end{eqnarray}
              Consider a point $x\in \R^D$, a set $P \subset \R^D$ of $k$ points, and a probability distribution 
              $\{\lambda_i\}_{i=1}^k$, along with a weighted sum $b= \sum_{i=1}^k \lambda_i p_i$. The random vector $Y$ supported on $P$, 
              with probability $\Prob{Y=p_i}=\lambda_i$, then satisfies $\Ex{Y}=b$. Define $X:=x-Y$, so that $\Ex{X} = x-b$. Then
              \begin{eqnarray*}
              \Ex{\|X\|^2} &=& \sum_{i=1}^k \lambda_i\|x-p_i\|^2, \mbox{ and} \\
              \mathrm{Var}(X) &=& \Ex{\| X-\Ex{X}\|^2} = \Ex{\|Y-b\|^2} = \sum_{i=1}^k \lambda_i \|b-p_i\|^2.
              \end{eqnarray*}
              Substituting in~\eqref{eqn:var-exsq-id}, the claim follows.
        \end{proof}
	
        \begin{lemma} [Lotz~\cite{doi:10.1098/rspa.2019.0081}, Lemma 4.2]
        \label{l:simp-fact}
           Given a set $P=\{p_1,\ldots,p_l\}\subset \R^D$ of points, and a point $c\in \R^D$ such that $c=\sum_{i\in I}\lambda_i p_i$, where 
           $I$ is a subset of indices from $[l]$, and $\lambda_i\geq 0$, with $\sum_{i\in I}\lambda_i=1$.
           Then 
           \[ \sum_{i\in I} \lambda_i \|c-p_i\|^2 = \frac{1}{2}\sum_{i,j\in I}\lambda_i\lambda_j\|p_i-p_j\|^2.\]
           Consequently, 
           \[ \rad^2(P) = \sum_{i\in I} \lambda_i \|c-p_i\|^2 = \frac{1}{2}\sum_{i,j\in I}\lambda_i\lambda_j\|p_i-p_j\|^2.\]
           where $\rad^2(P)$ is the squared radius of the minimum enclosing ball of the set $P$ of points.
        \end{lemma}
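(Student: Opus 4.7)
The plan is to derive Lemma~\ref{l:simp-fact} as a short consequence of the already-established identity~\eqref{eqn:basic-var-id} (proven in Lemma~\ref{l:basic-var-id}), rather than redo a variance computation from scratch. First, I apply~\eqref{eqn:basic-var-id} with the free point $x$ chosen to be one of the $p_j$'s, with $j\in I$; since $c=\sum_{i\in I}\lambda_ip_i$ plays the role of $b$, this gives
\[ \sum_{i\in I}\lambda_i\|p_j-p_i\|^2 \;=\; \|p_j-c\|^2 + \sum_{i\in I}\lambda_i\|c-p_i\|^2. \]

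Next, I multiply both sides by $\lambda_j$ and sum over $j\in I$. The left-hand side becomes $\sum_{i,j\in I}\lambda_i\lambda_j\|p_i-p_j\|^2$. On the right, the first term contributes $\sum_{j\in I}\lambda_j\|p_j-c\|^2$, and the second term picks up a factor $\sum_{j\in I}\lambda_j=1$, contributing another $\sum_{i\in I}\lambda_i\|c-p_i\|^2$. The right-hand side therefore equals $2\sum_{i\in I}\lambda_i\|c-p_i\|^2$, and dividing by $2$ yields exactly the identity claimed in Lemma~\ref{l:simp-fact}.

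For the \emph{consequently} part, I would invoke Lemma~\ref{l:meb-conv-comb} specialized to the case where all weights $w(i)$ are zero (the proof of that lemma goes through verbatim in this unweighted setting): the center $c$ of the smallest enclosing ball of $P$ lies in the convex hull of the active boundary points $\{p_i:i\in I\}$, and every such boundary point satisfies $\|c-p_i\|=\rad(P)$. Consequently $\sum_{i\in I}\lambda_i\|c-p_i\|^2 = \rad^2(P)\cdot\sum_{i\in I}\lambda_i = \rad^2(P)$, and combining with the identity just proved gives the closed-form expression for $\rad^2(P)$ in terms of pairwise squared distances.

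There is essentially no obstacle here: identity~\eqref{eqn:basic-var-id} does all the heavy lifting, and the only minor point to check is the unweighted specialization of Lemma~\ref{l:meb-conv-comb}, which is immediate because setting every weight to zero reduces power distance to squared Euclidean distance without changing any step of the convexity argument used there. An alternative, equally short route would be the probabilistic one: let $Y,Y'$ be i.i.d.\ with $\Prob{Y=p_i}=\lambda_i$, so $\Ex{Y}=c$; then $\Ex{\|Y-Y'\|^2}=2\,\mathrm{Var}(Y)=2\Ex{\|Y-c\|^2}$, which is the same identity in different notation. I prefer the first route since it reuses a lemma already stated in the appendix.
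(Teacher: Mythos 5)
Your proof is correct, but your primary route differs from the paper's. The paper proves this lemma probabilistically: it takes two i.i.d.\ random points $X_1,X_2$ supported on $\{p_i\}_{i\in I}$ with $\Prob{X_1=p_i}=\lambda_i$, applies the variance identity~\eqref{eqn:var-exsq-id} to $X_1-X_2$, and uses $\|\Ex{X_1-X_2}\|^2=0$ together with $\mathrm{Var}(X_1-X_2)=2\,\mathrm{Var}(X_1)=2\sum_{i\in I}\lambda_i\|c-p_i\|^2$ --- this is exactly the ``alternative route'' you sketch in your last sentences. Your main argument instead reuses the deterministic identity~\eqref{eqn:basic-var-id} with $x=p_j$, multiplies by $\lambda_j$, and sums over $j\in I$; the computation checks out (the LHS symmetrizes to the double sum, the RHS collapses to $2\sum_{i\in I}\lambda_i\|c-p_i\|^2$ since $\sum_j\lambda_j=1$). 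Your symmetrization has the merit of being fully deterministic and of leaning on a lemma already proved in the appendix, while the paper's version is a one-line variance computation; the two are morally the same decomposition seen from different angles. One point in your favor: the paper's proof silently omits the ``consequently'' clause about $\rad^2(P)$, whereas you justify it explicitly via the unweighted specialization of Lemma~\ref{l:meb-conv-comb} (center in the convex hull of the active points, all at distance $\rad(P)$), which is the right way to close that gap.
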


        \begin{proof}
           The proof again follows directly from Eqn.~\eqref{eqn:var-exsq-id}. Suppose we choose two random points $X_1,X_2$ independently from $P$, with the point $p_i$ 
         being chosen with probability $\lambda_i$. Then, 
         \[ \Ex{\|X_1-X_2\|^2} = \|\Ex{X_1-X_2}\|^2 + Var(X_1-X_2).\]
         Evaluating, we get that 
         \begin{eqnarray*}
            \Ex{\|X_1-X_2\|^2} &=& \sum_{i,j\in I} \lambda_i\lambda_j\|p_i-p_j\|^2, \\
            \|\Ex{X_1-X_2}\|^2   &=& \|\Ex{X_1}-\Ex{X_2}\|^2 = 0, \mbox{ and} \\ 
            Var(X_1-X_2)       &=& 2\cdot Var(X_1) = 2\sum_{i\in I}\lambda_i\|c-p_i\|^2,
         \end{eqnarray*}
         where in the last line, we used the fact that $X_1$ and $X_2$ are independent. Substituting the above values in the variance identity~\eqref{eqn:var-exsq-id}, 
         completes the proof.
        \end{proof}

        A probabilistic proof of Lemma~\ref{l:pow-dist-pairwise-dist} is also provided below.

        \begin{proof}[Lemma~\ref{l:pow-dist-pairwise-dist} - a probabilistic proof]
        Consider the following random experiment: pick a random point $X$ from $p_1,\ldots, p_k$ according to the distribution $(\lambda_i)_{i=1}^k$ 
        and another independently random point $Y$ from $q_1,\ldots,q_k$ according to $(\mu_i)_{i=1}^k$. 

        Using the law of total variance on the variable $X-Y$, conditioning on $Y$, we get that 
        \[ Var(X-Y) = Var_{Y}(\mathbb{E}_X(X-Y|Y)) + \mathbb{E}_Y[Var_X(X-Y|Y)] \mbox{ or,}\]
        \begin{eqnarray}
           \Ex{\|X-Y\|^2} &=& \|\Ex{X-Y}\|^2 + Var_{Y}(\mathbb{E}_{X}(X-Y|Y)) + \mathbb{E}_{Y}[Var_X(X-Y|Y)]  \label{eqn:cond-var-id}
        \end{eqnarray}

        Let us consider the terms in the above equation one by one. 
        \begin{enumerate}
           \item The LHS has $\Ex{\|X-Y\|^2}$, which by the independence of $X$ and $Y$ is clearly equal to 
           $\sum_{i,j=1}^k \lambda_i\mu_j \|p_i-q_j\|^2$.
           \item In the RHS, the first term is $\|\Ex{X-Y}\|^2=\|b_1-b_2\|^2$. 
           \item The second term is $Var_{Y}(\mathbb{E}_{X}(X-Y)|Y)$, which is equal to 
        $Var_{Y}(b_1-Y) = Var(Y) = \sum_{i=1}^k \mu_i\|b_2-q_i\|^2$, where the last expression was evaluated directly 
        from the definition of variance, i.e. $Var(Z) = \Ex{(Z-\Ex{Z})^2}$, and that for constant $a$, $Var(a-Z)=Var(Z)$. 
           \item The final term is $\mathbb{E}_{Y}(Var_X(X-Y|Y))$. Conditioning on $Y$, the variance $Var_X(X-Y|Y)=Var(X)$, i.e. $\sum_{i=1}^k \lambda_i \|b_1-p_i\|^2$. Since 
        this holds for each value of $Y$, we get that $\mathbb{E}_Y[Var_X(X-Y|Y)] = \mathbb{E}_Y[Var(X)] = \sum_{i=1}^k \lambda_i \|b_1-p_i\|^2$. 
        \end{enumerate}
        Substituting the above expressions for the terms in~\eqref{eqn:cond-var-id}, we get 
        \[ \sum_{i,j=1}^k \lambda_i\mu_j \|p_i-q_j\|^2 = \|b_1-b_2\|^2 + \sum_{i=1}^k \mu_i \|b_2-q_i\|^2 + \sum_{i=1}^k \lambda_i \|b_1-p_i\|^2. \]
        \end{proof}

\end{document}